\def\endthebibliography{%
	\def\@noitemerr{\@latex@warning{Empty `thebibliography' environment}}%
	\endlist
}
\begin{document}
\algnewcommand{\algorithmicgoto}{\textbf{go to}}%
\algnewcommand{\Goto}[0]{\algorithmicgoto~}%

\newtheorem{thm}{Theorem}
\newtheorem{lem}{Lemma}
\newtheorem{prop}[thm]{Proposition}
\newtheorem{cor}{Corollary}
\newtheorem{conj}{Conjecture}[section]
\newtheorem{defn}{Definition}
\newtheorem{exmp}{Example}[section]
\newtheorem{rem}{Remark}
\newcommand{\norm}[1]{\left\lVert#1\right\rVert}
\DeclarePairedDelimiter\abs{\lvert}{\rvert}%

\title{Energy-aware Resource Management for Federated Learning in Multi-access Edge Computing Systems}
\author{\uppercase{Chit Wutyee Zaw},
\uppercase{Shashi Raj Pandey} \IEEEmembership{Student Member, IEEE}, \uppercase{Kitae Kim}, and \uppercase{Choong Seon Hong}
\IEEEmembership{Senior Member, IEEE}
\thanks{Chit Wutyee Zaw, Shashi Raj Pandey, Kitae Kim, and Choong Seon Hong  are with the Department of Computer Science and Engineering, Kyung Hee University,  Yongin-si, Gyeonggi-do 17104, Rep. of Korea.
E-mail: {\{cwyzaw, shashiraj, glideslope, cshong\}@khu.ac.kr}}}

\maketitle

\begin{abstract}
In Federated Learning (FL), a global statistical model is developed by encouraging mobile users to perform the model training on their local data and aggregating the output local model parameters in an iterative manner. However, due to limited energy and computation capability at the mobile devices, the performance of the model training is always at stake to meet the objective of local energy minimization. In this regard, Multi-access Edge Computing (MEC)-enabled FL addresses the tradeoff between the model performance and the energy consumption of the mobile devices by allowing users to offload a portion of their local dataset to an edge server for the model training. Since the edge server has high computation capability, the time consumption of the model training at the edge server is insignificant. However, the time consumption for dataset offloading from mobile users to the edge server has a significant impact on the total time consumed to complete a single round of FL process. Thus, resource management in MEC-enabled FL is challenging, where the objective is to reduce the total time consumption while saving the energy consumption of the mobile devices. In this paper, we formulate an energy-aware resource management for MEC-enabled FL in which the model training loss and the total time consumption are jointly minimized, while considering the energy limitation of mobile devices.  In addition, we recast the formulated problem as a Generalized Nash Equilibrium Problem (GNEP) to capture the coupling constraints between the radio resource management and dataset offloading. To that end, we analyze the impact of the dataset offloading and computing resource allocation on the model training loss, time, and the energy consumption. Finally, we present the convergence analysis of the proposed solution, and evaluate its performance against the traditional FL approach. Simulation results demonstrate the efficacy of our proposed solution approach.

\end{abstract}

\begin{IEEEkeywords}
Dataset offloading, energy-aware resource management, federated learning, generalized Nash Equilibrium game, multi-access edge computing
\end{IEEEkeywords}

\section{Introduction}\label{sec:introduction}
Federated Learning (FL) builds a statistical model by allowing mobile users to train local models on datasets residing at their mobile devices \cite{mcmahan2017communication}. The users only share the trained local model parameters to a central server for model aggregation; thus, the local datasets' privacy is preserved. In recent years, several works study FL over wireless networks \cite{shashi_twc, naderializadeh2020communication, prof_nguyen_fl, yang2019energy,chen_joint_learning_communication,chen2020convergence, wadu2020federated}. These works are motivated by the possibility of leveraging existing cellular infrastructure for offering learning services to the users via distributed model training approach, such as FL \cite{niknam2020federated}. However, most of the works \cite{naderializadeh2020communication, prof_nguyen_fl, yang2019energy,chen_joint_learning_communication,chen2020convergence, wadu2020federated} highlight the implication of wireless resource optimization, convergence analysis, and training-time minimization when performing distributed model training over dynamic wireless conditions. Moreover, there are several other overlooked challenges and open problems for the direct implementation of FL over wireless networks \cite{fl_challenges, kairouz2019advances}. On the one hand, the model training's performance is significantly influenced by local datasets and computing resources used for the training. On the other hand, the subset of mobile devices selected in each round of model training affects the time required to reach a global model's desired accuracy level. This situation gets exacerbated when we have dynamic wireless conditions.

The trade-off between the model performance, energy and time consumption can be resolved by enabling Multi-access Edge Computing (MEC) in FL \cite{fl_mec_survey}. In particular, MEC brings the high computing servers closer to the mobile users so that users with low computing and energy capability are able to offload their latency and computing-intensive tasks to the edge server \cite{AbbasFeb.2018,mec_survey_architecture,mec_survey}. Therefore, mobile users are able to offload a selected portion of local dataset to the edge server where a statistical model is trained by the edge server simultaneously with several mobile devices in hands \cite{hybrid_fl,hiding_crowd}. Even though FL is intended for the privacy preserving application, a portion of local dataset which are not privacy-sensitive can be offloaded to the MEC for further computation. Then, the MEC server can perform the model training on all the datasets offloaded by the mobile users, simultaneously, and perform averaging of local model parameters and the obtained model to build a single global model. Besides, the users can determine the offloaded data samples based on the freshness of the collected data. Thus, this approach is more practical as it should be up to the users to decide the kind of data they want to share and further improve the model performance.

Moreover, the performance of the global model in FL is highly affected by the heterogeneity in computing resources of the mobile device for training the local model. Besides, due to the energy limitation of the mobile devices, the user may use less amount of local dataset and computing resource for the model training, which would result in lower model performance. Thus, the trade-off between the energy consumption of mobile devices and performance of the training model is required to be addressed in FL. In this regard, the edge server is a powerful computing device; hence, the time and energy consumption of the model training at the edge server is negligible. Therefore, it is intuitive to leverage the MEC infrastructure for sharing computation burden of resource constrained mobile devices during the model training process in FL. By allowing the mobile users to offload a portion of their local datasets to the edge server, the performance of the global model can be preserved while saving the energy consumption of the mobile devices.

Inline with this idea, the works in \cite{hiding_crowd} and \cite{hybrid_fl} proposed the local data sharing mechanism for FL. In \cite{hybrid_fl}, the authors mitigated the non-i.i.d. data problem by allowing a limited number of users to upload their local data to a server; and thus, the server trains a model on the uploaded data to support the FL process during model aggregation. Authors in \cite{hiding_crowd} proposed a distributed data augmentation algorithm in which users share a fraction of their local dataset to confront the lack of on-device data samples. Similar to these approaches \cite{hiding_crowd,hybrid_fl}, we give users the ultimate power to decide the dataset offloading. On the other hand, as a ML developer, the global model gets benefited with our proposed scheme, wherein we balance between the high accuracy obtained in a centralized setting and the distributed privacy preserving model training framework, such as FL. The proposed mechanism is practical and can be applied to real-time applications such as autonomous driving and mobile surveillance, where the privacy of the data collected from the devices is not a major concern.

 In summary, we raise two overlooked yet fundamentally coupled research questions here:
\begin{itemize}
\item \textit{How to involve more number of mobile devices, having a moderate computational capacity and reasonable privacy concerns, in the FL training process?}
\item \textit{How to perform an efficient resource optimization while ensuring the model performance?}
\end{itemize}

In this paper, we propose a MEC-enabled FL model to address the tradeoff between the training model's performance, total time, and energy consumption of mobile devices. The joint model learning and resource management problem is challenging due to the coupling among the offloading decision and resource management. Thus, Generalized Nash Equilibrium Game is formulated for the dataset offloading and uplink radio resource management to minimize the total time taken for one global iteration. The mobile users' energy limitation is considered in the local computing resource management problem where the mobile users have a moderate computational capacity and reasonable privacy concerns. The energy-aware resource management algorithm for the MEC-enabled FL is proposed in which the model training and resource management problems are solved alternatively.

\subsection{Related Works}
\subsubsection{Resource Management in FL}
The wireless resource management has been an interesting topic in FL. Author in \cite{naderializadeh2020communication} analyzed the communication latency for decentralized learning over wireless networks, where each node is allowed to communicate with its own neighbors. The optimization model is proposed in \cite{prof_nguyen_fl} for FL over wireless networks, where the energy and time consumption are jointly optimized by power allocation, local computing resource, and model accuracy. FL over wireless communication networks is studied in \cite{yang2019energy,chen_joint_learning_communication,chen2020convergence} in which the authors discussed the joint optimization of the model training and wireless resource allocation. The channel uncertainty is considered in \cite{wadu2020federated} where the joint user scheduling and resource block allocation is performed so as to minimize the loss of FL accuracy. The cost and learning loss are jointly minimized in \cite{closely_related} by selecting mobile users who are participating in FL. The selected users are allowed to determine the amount of data samples used for the model training. Two level aggregation for FL is proposed in \cite{cost_efficient_fl} in which an intermediate model aggregation can be performed at the edge server where the final model aggregation is performed at the cloud server.

\subsubsection{Resource Management in MEC}
The joint optimization of radio and computing resource management in MEC has been studied thoroughly in previous works. Authors in \cite{related_completion_time} proposed a two step optimization for radio and computing resource allocation so as to minimize the total processing time. A multi-cell MEC is considered in \cite{related_joint_multicell} in which the radio and computing resources are jointly optimized to save the energy consumption of the mobile users where the latency limit of the task offloading is considered. The queueing model for resource allocation is studied in \cite{related_latency,related_provision,related_stochastic} in which the stability of the queues is required to be satisfied in task offloading and resource allocation.

\subsubsection{Resource Management with Generalized Nash Equilibrium Problem (GNEP)}
GNEP is a promising technique to handle the strong coupling of optimization variables in resource allocation problems where both the objective and strategy sets of players are dependent on each other. The properties, existence of Generalized Nash Equilibrium (GNE), and solution algorithms are studied in \cite{gnep_concept}. GNEP for service provisioning problem is proposed in \cite{gnep_multi_cloud,gnep_service_provisioning,ardagna2012generalized} to model the multi-cloud systems among multiple service providers. GNEP for the task offloading in MEC is proposed in \cite{gnep_mec} in which the total time consumption is minimized by the offloading decision. The joint radio and computing resource management for MEC is formulated as a GNEP in \cite{our_work_bigcomp,our_work_icc} in which the authors proposed a penalty-based resource management algorithm to find a GNE.

\subsection{Our Contributions}
In this paper, an energy-aware resource management problem is formulated for the MEC-enabled FL model.
Our contributions are as follows:
\begin{itemize}
	\item We propose a MEC-enabled FL in which mobile users are allowed to offload a portion of their local datasets to the edge server. The proposed MEC-enabled FL reflects a practical scenario where the users have moderate computational capabilities and reasonable privacy concerns. The mobile users can determine the offloaded data samples depending on the freshness or privacy of the generated data samples. Moreover, the proposed MEC-enabled FL model addresses the tradeoff among the performance of learning model and energy consumption of the mobile devices. 
	\item The energy-aware resource management problem is formulated for the proposed MEC-enabled FL model. The learning model, dataset offloading, local computing, and uplink radio resources management are jointly optimized to minimize the training loss and time consumption for one global round, ensuring the energy constraints of the mobile devices.
	\item The uplink radio resource management of the edge server and the dataset offloading of the mobile devices are formulated as a GNEP to focus on the coupling among the resource management. The optimal solution of the dataset offloading and the resource management is derived, where the time consumption of the local and edge model is adjusted.
	\item Extensive simulations are performed to compare the performance of the proposed MEC-enabled FL and traditional FL in terms of the learning model, time, and energy consumption for the dataset offloading, and local computing resource management. In addition, we analyze and validate these performance metrics of the proposed algorithm on the cell-center and cell-edge user is analyzed in which the system heterogeneity of the mobile users is considered.
\end{itemize}
\begin{figure}[t!] 
	\centering
	\includegraphics[keepaspectratio=true,scale=0.3]{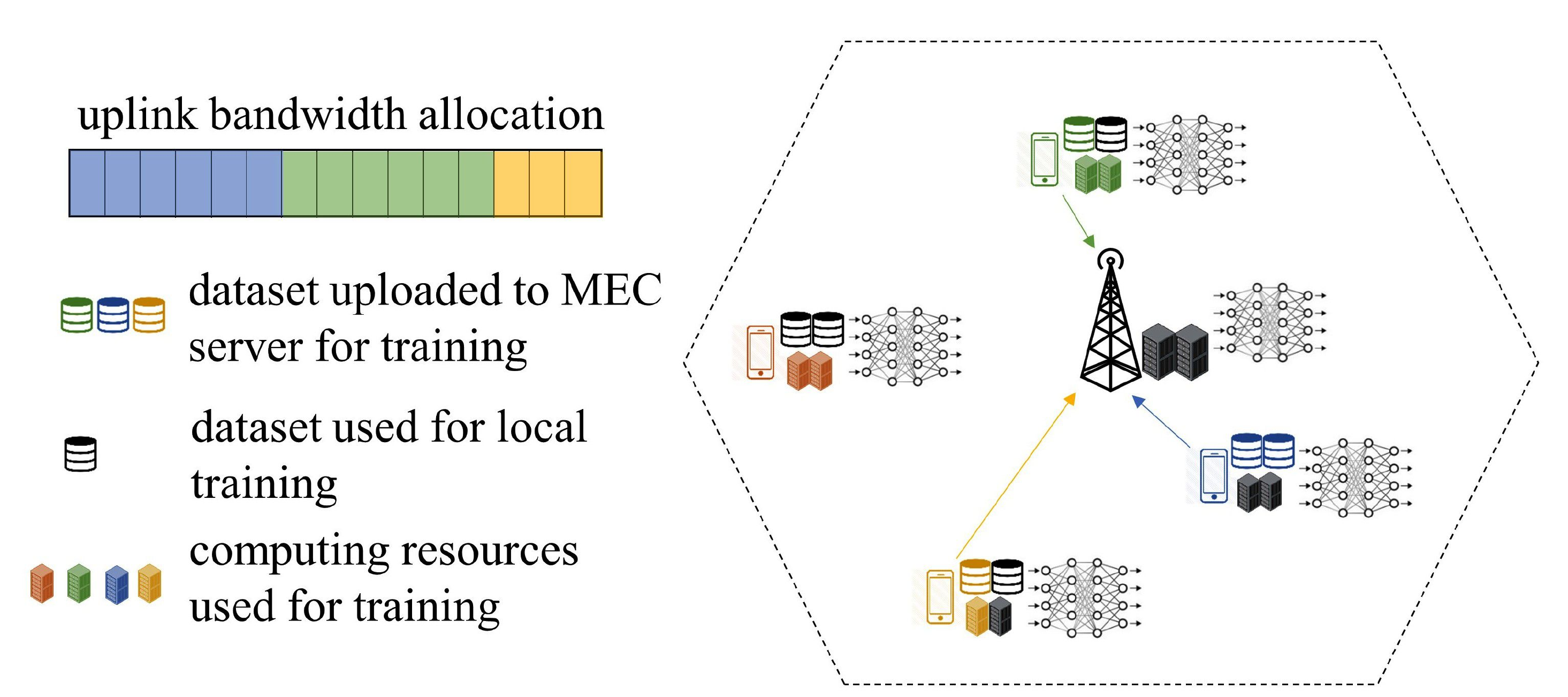}
	\caption{Resource management for MEC-enabled federated learning.}
	\label{fig:system_model}
\end{figure}
The rest of the paper is organized as follows. The system model is presented in Section \ref{sec:system_model} in which the communication and learning model for the MEC-enabled FL model are proposed. The energy aware resource management problem is formulated in Section \ref{sec:problem_formulation} by considering the energy limit of the mobile devices to minimize the total time consumption of one global iteration. The energy aware resource management algorithm for MEC-enabled FL is proposed in Section \ref{sec:solution}. In addition,  the performance of the proposed model is compared with the traditional FL in Section \ref{sec:simulation_results}. The paper is concluded in Section \ref{sec:conclusion}.

\section{System Model}\label{sec:system_model}
A single-cell MEC system is considered in this paper where an edge server is deployed at the access point which is utilized for training a statistical model simultaneously with the mobile devies. The energy consumption for the model training at the mobile users can be reduced by offloading the portion of their datasets to the edge server for the training. In the proposed MEC-enabled FL consists of an edge server, and a set of mobile users, $ \mathcal{I} \in \{ 1, 2, \cdots, I \} $ where user $ i $ has the local dataset $ \mathcal{D}_i $ to train a local model. The proposed MEC-enabled FL system is shown in Fig. \ref{fig:system_model} where the mobile users are allowed to offload $ 0 \leq \delta_i \leq 1 $ portion of the local dataset to the edge server, while the remaining $ (1 - \delta_i) $ portion of the dataset is used for the local model training. Depending on the energy level of the mobile users, the computing resource used for the local training is managed in order to minimize the training loss and time consumption of the model training. The dataset offloading and computing resource allocation is determined by the mobile users individually while the edge server controls the radio resource management for the dataset offloading and weight transmission.

An illustration of the proposed MEC-enabled FL model is shown in Fig. \ref{fig:system_flow} in which the time consumption at each stage is defined. Moreover, the synchronous update model for FL is considered in this paper. The mobile devices participating in FL train their local model with datasets residing at the mobile devices and transmit the weights of the model to the edge server in the traditional FL approach. In our proposed MEC-enabled FL model, the mobile devices simultaneously offload a portion of their datasets to the edge server and train their local model with the remaining portion. After all the offloaded datasets are received by the edge server, the edge training is performed. The model aggregation is carried out after the edge training and the weight transmission of all mobile devices.

\begin{figure}[t!]  
	\centering
	\includegraphics[keepaspectratio=true,scale=0.3]{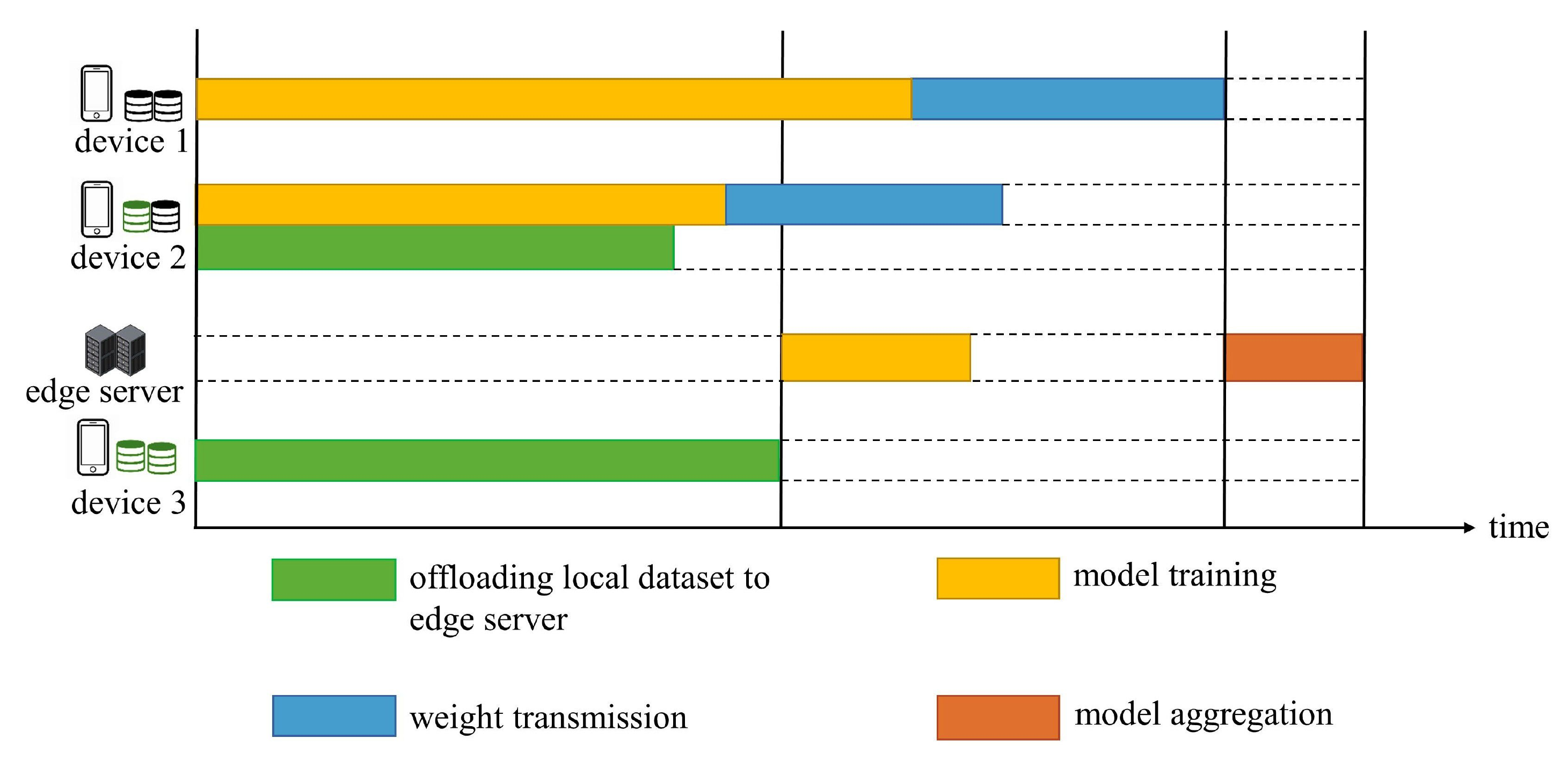}
	\caption{An illustration of MEC-enabled federated learning model.}
	\label{fig:system_flow}
\end{figure}

\subsection{Communication Model}\label{subsec:communication_model}
In this paper, we consider the Orthogonal Frequency Division Multiple Access (OFDMA) for data transmission, i.e., for the dataset offloading and weight transmission. The size of offloaded dataset varies across the mobile users depending on their channel condition and energy level, whereas the size of weight vectors is the same for all mobile users. Thus, the radio resource management for the uplink transmission is performed twice for the dataset offloading and weight uploading. These two transmissions are not performed simultaneously. The fraction of bandwidth allocated to user $ i $ for dataset offloading is denoted as $ \tilde{\omega}_i $, while $ \bar{\omega}_i $ is for the weight uploading. Thus, the achievable data rate of user $ i $ for the dataset offloading is defined as

\begin{equation}
R^{\text{off}}_i = \tilde{\omega}_i \omega \log_2 \left( 1 + \frac{p_i g_i}{n_0} \right).
\end{equation}

In addition, the achievable data rate of user $ i $ in uploading the weight vector is

\begin{equation}
R^{\text{upload}}_i = \bar{\omega}_i \omega \log_2 \left( 1 + \frac{p_i g_i}{n_0} \right),
\end{equation}
where $ \omega $ is the total available bandwidth of the access point for the uplink transmission, $ p_i $  is the transmit power of user $ i $, $ g_i $ is the uplink channel gain of user $ i $, and $ n_0 $ is the additive white Gaussian noise. For simplicity, let $ R_i $ be $ \omega \log_2 \left( 1 + \frac{p_i g_i}{n_0} \right) $. Thus, the achievable data rate for the dataset offloading and weight uploading is denoted by $ \tilde{\omega}_i R_i $ and $ \bar{\omega}_i R_i $ respectively.

\subsection{Federated Learning Model}\label{subsec:learning_model}
In traditional FL, a statistical model is learned by allowing users to train a local model on the dataset residing at their mobile devices. In order to achieve the higher model performance, the objective of user $ i $ is to minimize the training loss by optimizing the weight parameter $ \mathbf{w}_i $ with respect to its local dataset $ \mathcal{D}_i $ as follows:

\begin{equation*}\label{equ:loss_function_traditionalFL}
\underset{\mathbf{w}_i \in \mathbb{R}^n}{\text{minimize}}
\sum_{j \in \mathcal{D}_i} l(\mathbf{w}_i, \mathbf{x}_j, y_j),
\end{equation*}
where $ \mathbf{x}_j \in \mathbb{R}^n $ and $ y_j \in \mathbb{R} $ are the features vector and label of the data sample $ j \in \mathcal{D}_i $, $ n $ is the dimensions of the features vector. In this work, the Mean Squared Error (MSE) is used for calculating the loss function where the logistic regression is implemented for the model learning. After the local training is done at user $ i $, the weight vector $ \mathbf{w}_i $ is sent to the edge server for the model aggregation, where a global model is developed as defined in \cite{mcmahan2017communication} which is as follows:

\begin{equation*}\label{equ:weight_aggregation}
\bar{\mathbf{w}} = \frac{\sum_{i \in \mathcal{I}} |\mathcal{D}_i| \mathbf{w}_i}{\sum_{i \in \mathcal{I}} |\mathcal{D}_i|}.
\end{equation*}
The final model is derived by the contribution of mobile users which is defined as the proportion of the size of their local dataset to the total dataset.

The performance of the global model depends not only on the local dataset but also the computing resources used for the local training. A local model at user $ i $ is trained by updating the $ \mathbf{w}_i $ at multiple iterations according to stochastic gradient descent approach. User $ i $ could save its energy consumption by stopping the training after a few iterations. In order to preserve the performance of the final global model and energy consumption of the mobile devices, users can offload a portion of their dataset to the edge server. The proposed MEC-enabled FL is described in the following sections.

\subsection{Local Training Model}\label{subsec:local_training_model}
In the proposed MEC-enabled FL model, user $ i $ is allowed to offload a portion $ \delta_i $ of its local dataset $ \mathcal{D}_i $ to the edge server, while the remaining $ (1 - \delta_i) $ portion of the local dataset is used in training a model on the mobile devices. The objective of user $ i $ is to optimize a weight vector, $ \mathbf{w}_i $, by minimizing the training loss which is defined as follows:

\begin{equation}\label{equ:loss_function_mecFL}
\underset{\mathbf{w}_i \in \mathbb{R}^n}{\text{minimize}}
\sum_{j \in \bar{\mathcal{D}}_i} l(\mathbf{w}_i, \mathbf{x}_j, y_j),
\end{equation}
where $ \bar{\mathcal{D}}_i $ is the dataset to train the local model in which samples are chosen randomly from $ \mathcal{D}_i $ and $ |\bar{\mathcal{D}}_i| = (1 - \delta_i) |\mathcal{D}_i| $. After the local model is trained on the mobile device, user $ i $ uploads the weight vector, $ \mathbf{w}_i $, to the edge server which describes the local model. Thus, user $ i $ needs to perform two independent operations: i) the local training and ii) weight transmission to the edge server. Therefore, the time consumption of user $ i $ to execute the two stages is defined as

\begin{equation}\label{equ:local_time}
t_i^{\text{local}} = \frac{(1 - \delta_i) f(|\mathcal{D}_i|) \tau}{\gamma_i \Gamma_i} + \frac{f(|\mathbf{w}_i|)}{\bar{\omega}_i R_i},
\end{equation}
where $ f(|\mathcal{D}_i|) $ is a linear function of $ |\mathcal{D}_i| $ which defines the size of user $ i $'s local dataset in bytes, $ \tau $ is the number of CPU cycles required for one byte in the model training, $ \gamma_i $ is the fraction of CPU resources used for the model training, $ \Gamma_i $ is the total available CPU resources at user $ i $, and $ f(|\mathbf{w}_i|) $ is a linear function of the weight vector $ \mathbf{w}_i $ which defines the size of the weight vector in bytes.
The local energy consumption of user $ i $ for the local model training and weight transmission to the edge server is calculated as
\begin{equation}\label{equ:local_energy}
e_i^{\text{local}} = \psi (1 - \delta_i) f(|\mathcal{D}_i|) \tau (\gamma_i \Gamma_i)^2 + p_i \left( \frac{f(|\mathbf{w}_i|)}{\bar{\omega}_i R_i} \right),
\end{equation}
where $ \psi $ is the chip capacitance related to the CPU of the mobile device as defined in \cite{energy_local}.

\subsection{Edge Training Model}\label{subsec:edge_training_model}
In order to preserve the local energy and the performance of the final model, user $ i $ is allowed to offload $ \delta_i $ portion of its local dataset $ \mathcal{D}_i $ to the edge server, where the model training is performed on the offloaded dataset in which the weight vector $ \mathbf{w}_E $ is optimized to minimize the training loss as follows:
\begin{equation}\label{equ:loss_function_edge}
\underset{\mathbf{w}_E \in \mathbb{R}^n}{\text{minimize}}
\sum_{j \in \tilde{\mathcal{D}}_E} l(\mathbf{w}_E, \mathbf{x}_j, y_j),
\end{equation}
where $ \tilde{\mathcal{D}}_E = \cup_{i \in \mathcal{I}} \tilde{\mathcal{D}}_i $, $ \tilde{\mathcal{D}}_i $ is the offloaded dataset of user $ i $ in which the data samples are chosen randomly from the local dataset $ \mathcal{D}_i $ such that $ |\tilde{\mathcal{D}}_i| = \delta_i |\mathcal{D}_i| $, $ \tilde{\mathcal{D}}_i \cup \bar{\mathcal{D}} = \mathcal{D}_i $, and $ \tilde{\mathcal{D}}_i \cap \bar{\mathcal{D}}_i = \emptyset $.
The model training at the edge server involves two stages as well which are the dataset offloading of the mobile users and the weight optimization. Thus, the time consumption of the edge training is defined as follows:
\begin{equation}\label{equ:edge_time}
t^{\text{edge}} = \max_{i \in \mathcal{I}} \left \{ \frac{\delta_i f(|\mathcal{D}_i|)}{\tilde{\omega}_i R_i} \right \} + \frac{\sum_{i \in \mathcal{I}} \delta_i f(|\mathcal{D}_i|) \tau}{\Gamma_E},
\end{equation}
where $ \Gamma_E $ is the available CPU resources of the edge server.
The energy consumption of user $ i $ in the dataset offloading is calculated as follows:
\begin{equation}\label{equ:off_energy}
e_i^{\text{off}} = p_i \left( \frac{\delta_i f(|\mathcal{D}_i|)}{\tilde{\omega} R_i} \right).
\end{equation}

Once the model training at the edge server is executed and the weight transmission from the mobile users is completed, the final model aggregation is performed. Following the analysis of \cite{mcmahan2017communication}, we define the final model aggregation as follows:
\begin{equation}\label{equ:model_agg_edge}
\bar{\mathbf{w}} = \frac{\sum_{i \in \mathcal{I}} |\bar{\mathcal{D}}_i| \mathbf{w}_i + |\tilde{\mathcal{D}}_E| \mathbf{w}_E}{\sum_{i \in \mathcal{I}} |\mathcal{D}_i|},
\end{equation}
where the contribution of the weights from the mobile users and the edge server to the final model is proportional to their data samples used in the model training.

\section{Problem Formulation}\label{sec:problem_formulation}
In this section, the energy aware resource management problem for the MEC-enabled FL is formulated, where the training loss and the time consumption for one communication is jointly minimized while considering the energy level of the mobile devices. It is crucial to minimize the time taken for one communication round in the proposed MEC-enabled FL because the time taken for dataset offloading influences the total time consumption which can be significantly higher than the traditional FL model. Since both the edge server, and the mobile user $ i $ gets involve in the model training for the proposed MEC-enabled FL model, the total time taken for one communication round is defined as follows:
\begin{equation}\label{equ:total_time}
t^{\text{total}} = \max \left \{ \max_{i \in \mathcal{I}} \left \{ t_i^{\text{local}} \right \}, t^{\text{edge}} \right \},
\end{equation}
where the synchronous model update is considered.
In addition, the total energy consumption of user $ i $ in the proposed MEC-enabled FL is defined as follows:
\begin{equation}\label{equ:total_energy}
e_i^{\text{total}} = e_i^{\text{local}} + e_i^{\text{off}}.
\end{equation}

Thus, the energy aware resource management problem for the MEC-enabled FL is formulated in which the training loss and the total time consumption are jointly minimized by guaranteeing the energy limit of the mobile devices.

The optimization problem of the edge server, where the uplink radio resources are managed so as to jointly minimize the edge training loss and total time consumption is defined as follows:
\begin{equation}\label{problem_edge_server}
\begin{aligned}
& \underset{\mathbf{w}_E \in \mathbb{R}^n, [\boldsymbol{\tilde{\omega}}, \boldsymbol{\bar{\omega}}] \in \mathcal{S}_E}{\text{minimize}}
& & \alpha_1 \left[ \sum_{j \in \bar{\mathcal{D}}_E} l(\mathbf{w}_E, x_j, y_j) \right] + \alpha_2 \, t^{\text{total}} \\
& \text{subject to}
& & e_i^{\text{total}} \leq \Delta e_i, \forall i \in \mathcal{I},
\end{aligned}
\end{equation}
where $ \alpha_1 $ and $ \alpha_2 $ are, respectively, the scaling parameters for the training loss and total time. 
The set $ \mathcal{S}_E $ is defined as
\begin{equation}\label{equ:edge_set}
\mathcal{S}_E = \left \{ [\boldsymbol{\tilde{\omega}}, \boldsymbol{\bar{\omega}}] | \sum_{i \in \mathcal{I}} \tilde{\omega}_i \leq 1, \sum_{i \in \mathcal{I}} \bar{\omega}_i \leq 1, \boldsymbol{\tilde{\omega}}, \boldsymbol{\bar{\omega}} \geq 0 \right \},
\end{equation}
where $ \boldsymbol{\tilde{\omega}} = \left[ \tilde{\omega}_i \right]^T_{i \in \mathcal{I}} $, $ \boldsymbol{\bar{\omega}} = \left[ \bar{\omega}_i \right]^T_{i \in \mathcal{I}} $. $ \mathcal{S}_E $ defines the limit of the uplink radio resource allocation where the sum of the fraction of uplink bandwidth allocated to the mobile users must not exceed 1 for both dataset offloading and weight transmission, and the resource allocation must be non-negative.

The dataset offloading and computing resources optimization of user $ i $ to jointly optimize the local training loss and the total time consumption is as follows:
\begin{equation}\label{problem_local}
\begin{aligned}
& \underset{\mathbf{w}_i \in \mathbb{R}^n, [\delta_i, \gamma_i] \in \mathcal{S}_i}{\text{minimize}}
& & \alpha_1 \left[ \sum_{j \in \bar{\mathcal{D}}_i} l(\mathbf{w}_i, x_j, y_j) \right] + \alpha_2 \, t^{\text{total}} \\
& \text{subject to}
& & e_i^{\text{total}} \leq \Delta e_i.
\end{aligned}
\end{equation}
The set, $ \mathcal{S}_i $ is defined as
\begin{equation}\label{local_set}
\mathcal{S}_i = \left \{ [\delta_i, \gamma_i] | 0 \leq \delta_i \leq 1, 0 \leq \gamma_i \leq 1 \right \},
\end{equation}
where $ \mathcal{S}_i $ defines the boundaries for the data offloading variable $ \delta_i $, and computing resource allocation $ \gamma_i $, to ensure that $ \delta_i $ and $ \gamma_i $ are non-negative and must not exceed the total available resources.

The formulated energy-aware resource management problem is challenging to solve due to the non-convexity and strong coupling among the decision variables. Thus, we first decouple the formulated problem of user $ i $ into the computing resource management problem and dataset offloading problem. Then, the GNEP is formulated for the uplink radio resource management at the edge server and the dataset offloading at the user $ i $ in which the coupling in their objective function and strategy sets are analyzed.

\section{Energy-aware Resource Management For MEC-enabled FL} \label{sec:solution}
In this section, the energy-aware resource management algorithm is presented where the model training and the resource management problem at the edge server and the mobile users are decoupled and solved alternatively. The uplink radio resource management problem of the edge server where the objective is to minimize the total time consumption while guaranteeing the energy limit of the mobile devices is defined as follows:
\begin{equation}\label{problem_edge_server_resource}
\begin{aligned}
& \underset{[\boldsymbol{\tilde{\omega}}, \boldsymbol{\bar{\omega}}] \in \mathcal{S}_E}{\text{minimize}}
& & t^{\text{total}} \\
& \text{subject to}
& & e_i^{\text{total}} \leq \Delta e_i, \forall i \in \mathcal{I}.
\end{aligned}
\end{equation} 
Thus, the edge server solves \eqref{equ:loss_function_edge} and \eqref{problem_edge_server_resource} alternatively. Moreover, the objective of the mobile user $ i $ is to minimize the total time consumption by optimizing the dataset offloading and computing resource management by ensuring its energy limit is as follows:

\begin{equation}\label{problem_local_resource}
\begin{aligned}
& \underset{[\delta_i, \gamma_i] \in \mathcal{S}_i}{\text{minimize}}
& & t^{\text{total}} \\
& \text{subject to}
& & e_i^{\text{total}} \leq \Delta e_i,
\end{aligned}
\end{equation}
where the mobile user $ i $ solves \eqref{equ:loss_function_mecFL} and \eqref{problem_local_resource} alternatively.
Due to the non-convexity and coupling among the dataset offloading $ \delta_i $, and the computing resource management $ \gamma_i $, the resource management problem of user $ i $ is decoupled into two independent problems, which are the dataset offloading problem and the computing resource management problem.

\subsection{Computing Resource Management for MEC-enabled FL}
Given the dataset offloading decision $ \delta_i $ and the uplink bandwidth allocation for the weight transmission $ \bar{\omega}_i $, the computing resource management problem of user $ i $ to minimize the total time consumption by taking its energy limit is defined as
\begin{equation}\label{problem_computing_resource}
\begin{aligned}
& \underset{\gamma_i \in \bar{\mathcal{S}}_i}{\text{minimize}}
& & t_i^{\text{local}} \\
& \text{subject to}
& & e_i^{\text{total}} \leq \Delta e_i,
\end{aligned}
\end{equation}
where $ \bar{\mathcal{S}}_i = \{ \gamma_i | 0 \leq \gamma_i \leq 1 \} $. As stated in Appendix \ref{proof_convexity}, the local time consumption $ t_i^{\text{local}} $ is decreasing in $ \gamma_i $, but the local energy consumption $ e_i^{\text{total}} $ is increasing with respect to $ \gamma_i $. Thus, the optimal value of $ \gamma_i $ exists when $ t_i^{\text{local}} $ can be decreased until $ e_i^{\text{total}} = \Delta e_i $. The closed form solution of $ \gamma_i $, which can be derived with the KKT conditions, is described as follows:
\begin{equation*}
\gamma_i = \left[ \frac{\Delta e_i - p_i \left( \frac{\delta_i f(|\mathcal{D}_i|)}{\tilde{\omega}_i R_i} + \frac{f(|\mathbf{w}_i|)}{\bar{\omega}_i R_i}\right)}{\psi (1 - \delta_i) f(|\mathcal{D}_i|) \tau (\Gamma_i)^2} \right]^{1/2},
\end{equation*}
\begin{equation}\label{computing_resource_sol}
\gamma_i^* = \max\{ \min \{ \gamma_i, 1 \}, 0 \},
\end{equation}
which is the projection of $ \gamma_i $ onto $ \bar{\mathcal{S}}_i $.
\begin{proof}
	Appendix \ref{proof_computing_resource_sol}.
\end{proof}

Thus, the optimal computing resource allocation of user $ i $ depends not only on its energy level but also the amount of datasets used for the local training and energy consumption for the uplink transmissions. The computing resource used for the local model training $ \gamma_i $ will be less if the energy level of the mobile user $ i $, $ \Delta e_i $ is low, or the total energy consumption of the uplink transmissions is high. Moreover, in order to guarantee the energy limit of the mobile device $ i $, less computing resource should be used for the model training if the amount of dataset used for the model training is large.

The energy limitation of the mobile users is assumed to be satisfied  by the computing resource management of the mobile users. Thus, we eliminate the energy limitation constraints of the mobile devices from the dataset offloading and uplink resource management problem. However, our proposed solution approach can satisfy the energy limitation of the mobile devices which is shown in the simulation results in Section \ref{sec:simulation_results}.

\subsection{Dataset Offloading Problem}
The dataset offloading of the mobile users and the uplink bandwidth management of the edge server can be formulated as follows. Given the computing resource allocation $ \gamma_i $, the uplink bandwidth management for the dataset offloading $ \tilde{\omega}_i $, and the weight transmission $ \bar{\omega}_i $, the dataset offloading decision of user $ i $ can be formulated as follows:
\begin{equation}\label{problem_offloading}
\underset{\delta_i \in \tilde{\mathcal{S}}_i}{\text{minimize}} \qquad t^{\text{total}},
\end{equation}
where $ \tilde{\mathcal{S}}_i = \{ \delta_i | 0 \leq \delta_i \leq 1 \} $. 
The dataset offloading problem \eqref{problem_offloading} can be rewritten as 
\begin{equation}\label{problem_offloading_couple}
\begin{aligned}
& \underset{\delta_i, \in \tilde{\mathcal{S}}_i}{\text{minimize}}
& & \Delta t \\
& \text{subject to}
& & t_i^{\text{local}} \leq \Delta t, \\
& & & t_i^{\text{edge}} \leq \Delta t,
\end{aligned}
\end{equation}
where $ t_i^{\text{edge}} = \frac{\delta_i f(|\mathcal{D}_i|)}{\tilde{\omega}_i R_i} + \frac{\sum_{i \in \mathcal{I}} \delta_i f(|\mathcal{D}_i|) \tau}{\Gamma_E} $.

\subsection{Uplink Resource Management Problem}
Given the dataset offloading $ \delta_i $ and the computing resource allocation $ \gamma_i, \forall i \in \mathcal{I} $, the uplink bandwidth allocation for the dataset offloading $ \boldsymbol{\tilde{\omega}} $ and the weight transmission $ \boldsymbol{\bar{\omega}} $ of the edge server can be formulated as follows:
\begin{equation}\label{problem_uplink}
\underset{[\boldsymbol{\tilde{\omega}}, \boldsymbol{\bar{\omega}}] \in \mathcal{S}_E}{\text{minimize}} \qquad t^{\text{total}}.
\end{equation}
 
 The uplink bandwidth management problem \eqref{problem_uplink} can be rewritten as
 \begin{equation}\label{problem_uplink_couple}
 \begin{aligned}
 & \underset{[\boldsymbol{\tilde{\omega}}, \boldsymbol{\bar{\omega}}] \in \mathcal{S}_E}{\text{minimize}}
 & & \Delta t \\
 & \text{subject to}
 & & t_i^{\text{local}} \leq \Delta t, \forall i \in \mathcal{I}, \\
 & & & t_i^{\text{edge}} \leq \Delta t, \forall i \in \mathcal{I}.
 \end{aligned}
 \end{equation} 
 
 As defined in \eqref{problem_offloading_couple} and \eqref{problem_uplink_couple}, the dataset offloading decision $ \delta_i $ is strongly dependent on the uplink bandwidth allocations $ \tilde{\omega}_i $ and $ \bar{\omega}_i $. Since the uplink bandwidth resource is limited, the management of the uplink radio resources among the mobile users with respect to their data offloading decisions is challenging. 
 
 \subsection{GNEP Formulation for Time Minimization}
 The GNEP formulation is defined in order to address the strong coupling among the uplink bandwidth and the dataset offloading, where not only the objective functions of the players but also their strategy sets are dependent on each other, as defined in \eqref{problem_offloading_couple} and \eqref{problem_uplink_couple}. Let $ \mathcal{P} = \{ 0, 1, 2, \cdots, I \} $ be the set of players in the time minimization game, where the edge server is indexed as $ 0 $, and the mobile user $ i $ as $ i = 1, 2, \cdots, I $. The action of the edge server, which is the uplink bandwidth management for the dataset offloading and the weight transmission, is denoted by $ \mathbf{x}_0 $, where $ \mathbf{x}_0 \coloneqq [ \boldsymbol{\tilde{\omega}}, \boldsymbol{\bar{\omega}} ] $. In addition, the action of user $ i, \forall i \in \mathcal{I} $, which is the dataset offloading decision, is denoted by $ \mathbf{x}_i $, where $ \mathbf{x}_i \coloneqq \delta_i $.
 
 The coupling constraints of the edge server and the mobile users are rewritten as follows. The local time consumption constraint is re-defined as
 \begin{equation}\label{equ:local_time_gnep}
 h_i^{\text{local}}(\delta_i, \gamma_i, \bar{\omega}_i) \leq 0,
 \end{equation}
 where $ h_i^{\text{local}}(\delta_i, \gamma_i, \bar{\omega}_i) = t_i^{\text{local}} - \Delta t $.
 The edge time consumption constraint is re-defined as
 \begin{equation}\label{equ:edge_time_gnep}
 h_i^{\text{edge}}(\boldsymbol{\delta}, \tilde{\omega}_i) \leq 0,
 \end{equation}
 where $ h_i^{\text{edge}}(\boldsymbol{\delta}, \tilde{\omega}_i) = t_i^{\text{edge}} - \Delta t $, $ \boldsymbol{\delta} = [\delta_i]_{i \in \mathcal{I}}^T $.
 
 The GNEP formulation of the edge server for the uplink bandwidth resource management is defined as
 \begin{equation}\label{gnep_uplink}
 \begin{aligned}
G_0(\mathbf{x}_{-0}) : \qquad & \underset{[\boldsymbol{\tilde{\omega}}, \boldsymbol{\bar{\omega}}] \in \check{\mathcal{S}}_0}{\text{minimize}}
 & & \Delta t \\
 & \text{subject to}
 & & h_i^{\text{local}}(\delta_i, \gamma_i, \bar{\omega}_i) \leq 0, \forall i \in \mathcal{I}, \\
 & & & h_i^{\text{edge}}(\boldsymbol{\delta}, \tilde{\omega}_i) \leq 0, \forall i \in \mathcal{I},
 \end{aligned}
 \end{equation} 
 where $ \check{\mathcal{S}}_0 = \mathcal{S}_E $. Let $ \hat{\mathcal{S}}_0 $ be the set of the coupling constraints of the edge server which is defined as 
 \begin{equation}\label{coupling_set_uplink}
 \hat{\mathcal{S}}_0 = \left \{ [\boldsymbol{\tilde{\omega}}, \boldsymbol{\bar{\omega}}] | h_i^{\text{local}}(\delta_i, \gamma_i, \bar{\omega}_i) \leq 0, h_i^{\text{edge}}(\boldsymbol{\delta}, \tilde{\omega}_i) \leq 0, \forall i \in \mathcal{I} \right \}.
 \end{equation}
 The compactness and convexity of $ G_0(\mathbf{x}_{-0}) $ is defined in the following lemma.
 \begin{lem}\label{g_0_convex}
 	$ \check{\mathcal{S}}_0 $ is compact and convex set. $ h_i^{\text{local}}(\delta_i, \gamma_i, \bar{\omega}_i) $ and $ h_i^{\text{edge}}(\boldsymbol{\delta}, \tilde{\omega}_i) $ are continous and convex in $ \boldsymbol{\tilde{\omega}} $, $ \boldsymbol{\bar{\omega}} $.
 \end{lem}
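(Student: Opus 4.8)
The plan is to verify the two assertions of the lemma by elementary convexity bookkeeping on the closed forms of $ t_i^{\text{local}} $ and $ t_i^{\text{edge}} $, with $ \delta_i $, $ \gamma_i $, $ \boldsymbol{\delta} $ treated as fixed parameters (as in the GNEP decomposition).

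\emph{Compactness and convexity of $ \check{\mathcal{S}}_0 $.} Since $ \check{\mathcal{S}}_0 = \mathcal{S}_E $, I would observe that $ \mathcal{S}_E $ in \eqref{equ:edge_set} is the intersection of the two half-spaces $ \{ \sum_{i} \tilde{\omega}_i \le 1 \} $ and $ \{ \sum_{i} \bar{\omega}_i \le 1 \} $ with the non-negative orthant; each of these is closed and convex, hence so is $ \mathcal{S}_E $ (it is a polyhedron). Boundedness follows because non-negativity together with $ \sum_{i} \tilde{\omega}_i \le 1 $ forces $ 0 \le \tilde{\omega}_i \le 1 $ for every $ i $, and likewise $ 0 \le \bar{\omega}_i \le 1 $, so $ \mathcal{S}_E \subseteq [0,1]^{2I} $. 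A closed and bounded subset of $ \mathbb{R}^{2I} $ is compact by Heine--Borel.

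\emph{Convexity and continuity of the coupling-constraint functions.} For $ h_i^{\text{local}} = t_i^{\text{local}} - \Delta t $, the only term of $ t_i^{\text{local}} $ in \eqref{equ:local_time} involving the bandwidth variables is $ f(|\mathbf{w}_i|)/(\bar{\omega}_i R_i) $, and it depends on $ \bar{\omega}_i $ alone. Writing $ c_i \coloneqq f(|\mathbf{w}_i|)/R_i > 0 $, this term equals $ c_i/\bar{\omega}_i $, whose second derivative $ 2 c_i / \bar{\omega}_i^3 $ is positive for $ \bar{\omega}_i > 0 $, hence it is convex; adding the $ \bar{\omega} $-independent quantity $ (1-\delta_i) f(|\mathcal{D}_i|) \tau / (\gamma_i \Gamma_i) $ and the additive term $ -\Delta t $ (affine in $ \Delta t $) preserves convexity. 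Because the resulting function does not depend on the other coordinates of $ (\boldsymbol{\tilde{\omega}}, \boldsymbol{\bar{\omega}}) $, it is jointly convex in $ (\boldsymbol{\tilde{\omega}}, \boldsymbol{\bar{\omega}}) $ and continuous wherever $ \bar{\omega}_i > 0 $. An identical argument applied to the expression for $ t_i^{\text{edge}} $ displayed below \eqref{problem_offloading_couple} shows that $ h_i^{\text{edge}} = t_i^{\text{edge}} - \Delta t $ is convex in $ (\boldsymbol{\tilde{\omega}}, \boldsymbol{\bar{\omega}}) $: its only bandwidth-dependent term is $ \delta_i f(|\mathcal{D}_i|)/(\tilde{\omega}_i R_i) = c_i'/\tilde{\omega}_i $ with $ c_i' \coloneqq \delta_i f(|\mathcal{D}_i|)/R_i \ge 0 $, which is convex in $ \tilde{\omega}_i $ on $ \tilde{\omega}_i > 0 $, while $ \sum_{j \in \mathcal{I}} \delta_j f(|\mathcal{D}_j|) \tau / \Gamma_E $ is constant in the bandwidth variables and $ -\Delta t $ is affine.

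\emph{Main obstacle.} The only point needing care is the boundary $ \bar{\omega}_i = 0 $ or $ \tilde{\omega}_i = 0 $, where $ t_i^{\text{local}} $ and $ t_i^{\text{edge}} $ blow up so the constraint functions are not finite-valued there. I would dispose of this either by restricting the edge server's strategy set to $ \tilde{\omega}_i, \bar{\omega}_i \ge \epsilon $ for an arbitrarily small $ \epsilon > 0 $ — harmless, since a zero allocation drives the corresponding latency to infinity and can never belong to a time-minimizing profile — or by extending the reciprocal to $ [0,\infty) $ via the convention $ c/0 \coloneqq +\infty $ and invoking convexity and lower semicontinuity of the resulting extended-real-valued function. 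With this convention in place, every other step above is unchanged, and the lemma follows.
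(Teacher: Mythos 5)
Your proposal is correct and takes essentially the same route as the paper's own proof (Appendix~\ref{proof_time}): convexity of $t_i^{\text{local}}$ and $t_i^{\text{edge}}$ is established by checking that the second derivatives of the reciprocal bandwidth terms $c/\bar{\omega}_i$ and $c'/\tilde{\omega}_i$ are nonnegative, with separability giving convexity in $(\boldsymbol{\tilde{\omega}},\boldsymbol{\bar{\omega}})$, and compactness of $\check{\mathcal{S}}_0=\mathcal{S}_E$ follows from it being a closed, bounded polyhedron. You are in fact somewhat more complete than the paper, which leaves the compactness argument and the behaviour at $\tilde{\omega}_i=0$ or $\bar{\omega}_i=0$ implicit, whereas you handle the boundary explicitly via an $\epsilon$-restriction or the extended-real convention.
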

\begin{proof}
	Appendix \ref{proof_time}.
\end{proof}
 The GNEP formulation of the mobile user $ i $ for the dataset offloading is defined as
 \begin{equation}\label{gnep_offloading}
 \begin{aligned}
 G_i(\mathbf{x}_{-i}): \qquad & \underset{\delta_i, \in \check{\mathcal{S}}_i}{\text{minimize}}
 & & \Delta t \\
 & \text{subject to}
 & & h_i^{\text{local}}(\delta_i, \gamma_i, \bar{\omega}_i) \leq 0, \\
 & & & h_i^{\text{edge}}(\boldsymbol{\delta}, \tilde{\omega}_i) \leq 0,
 \end{aligned}
 \end{equation}
 where $ \check{\mathcal{S}}_i = \tilde{\mathcal{S}}_i $.
 Let $ \hat{\mathcal{S}}_i $ be the set of the coupling constraints of player $ i $ which is defined as
 \begin{equation}\label{coupling_set_offloading}
 \hat{\mathcal{S}}_i = \left \{ h_i^{\text{local}}(\delta_i, \gamma_i, \bar{\omega}_i) \leq 0, h_i^{\text{edge}}(\boldsymbol{\delta}, \tilde{\omega}_i) \leq 0 \right \}.
 \end{equation}
  The compactness and convexity of $ G_i(\mathbf{x}_{-i}) $ is defined in the following lemma.
 \begin{lem}\label{g_i_convex}
 	$ \check{\mathcal{S}}_i $ is compact and convex set. $ h_i^{\text{local}}(\delta_i, \gamma_i, \bar{\omega}_i) $ and $ h_i^{\text{edge}}(\boldsymbol{\delta}, \tilde{\omega}_i) $ are continuous and convex in $ \delta_i $.
 \end{lem}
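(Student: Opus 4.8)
The plan is to mirror the argument used for Lemma~\ref{g_0_convex}, but with the roles of the variables swapped: now $\delta_i$ is player $i$'s single decision variable, while the computing fraction $\gamma_i$, the bandwidth shares $\tilde{\omega}_i,\bar{\omega}_i$, the rivals' offloading decisions $\delta_j$ for $j\neq i$, and the auxiliary time bound $\Delta t$ are all frozen as fixed parameters. First I would dispose of the strategy set. By definition $\check{\mathcal{S}}_i=\tilde{\mathcal{S}}_i=\{\delta_i\mid 0\le\delta_i\le1\}$ is a closed and bounded interval of $\mathbb{R}$, hence compact, and every interval is convex; so $\check{\mathcal{S}}_i$ is compact and convex.

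Next I would verify convexity and continuity of the two coupling-constraint functions directly from their closed forms. For $h_i^{\text{local}}(\delta_i,\gamma_i,\bar{\omega}_i)=t_i^{\text{local}}-\Delta t$, I substitute \eqref{equ:local_time}: the summand $f(|\mathbf{w}_i|)/(\bar{\omega}_i R_i)$ is free of $\delta_i$, while $(1-\delta_i)f(|\mathcal{D}_i|)\tau/(\gamma_i\Gamma_i)$ is affine in $\delta_i$ because $f(|\mathcal{D}_i|)$, $\tau$, $\gamma_i$, $\Gamma_i$ are constants with $\gamma_i\Gamma_i>0$; hence $h_i^{\text{local}}$ is affine in $\delta_i$, and affine functions are both convex and continuous. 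For $h_i^{\text{edge}}(\boldsymbol{\delta},\tilde{\omega}_i)=t_i^{\text{edge}}-\Delta t$ with $t_i^{\text{edge}}=\delta_i f(|\mathcal{D}_i|)/(\tilde{\omega}_i R_i)+\sum_{j\in\mathcal{I}}\delta_j f(|\mathcal{D}_j|)\tau/\Gamma_E$, the first term is linear in $\delta_i$ and, inside the sum, only the $j=i$ term depends on $\delta_i$ (again linearly), the remaining $j\neq i$ terms contributing a constant; therefore $h_i^{\text{edge}}$ is also affine — and so convex and continuous — in $\delta_i$, which establishes the claim.

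There is no genuine obstacle here; the only point needing a word of care is that continuity must hold on all of $\check{\mathcal{S}}_i$, and this is immediate once one notes the denominators $\gamma_i\Gamma_i$, $\tilde{\omega}_i R_i$, $\bar{\omega}_i R_i$, $\Gamma_E$ are strictly positive and independent of $\delta_i$, so no singularity can arise as $\delta_i$ ranges over $[0,1]$. (If instead $\delta_i$ and $\Delta t$ are taken jointly as the player's variables, the same computation shows both functions are jointly affine in $(\delta_i,\Delta t)$, leaving the convexity claim untouched.) The real content of the lemma is thus the structural observation that, once the rivals' variables are held fixed, player $i$'s local- and edge-time constraints collapse to linear inequalities in its own offloading variable — which is precisely what makes each $G_i(\mathbf{x}_{-i})$ a convex program and underpins the subsequent analysis of the GNE.
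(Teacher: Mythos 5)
Your proposal is correct and matches the paper's argument in substance: the paper's Appendix \ref{proof_time} verifies the same facts by computing $\partial^2 t_i^{\text{local}}/\partial\delta_i^2 = 0$ and $\partial^2 t_i^{\text{edge}}/\partial\delta_i^2 = 0$, which is exactly your observation that both constraint functions are affine (hence convex and continuous) in $\delta_i$, and the compactness and convexity of $\check{\mathcal{S}}_i=[0,1]$ is handled just as trivially. Your explicit note that the denominators $\gamma_i\Gamma_i$, $\tilde{\omega}_i R_i$, $\bar{\omega}_i R_i$, $\Gamma_E$ are positive and independent of $\delta_i$ is a small but welcome addition the paper leaves implicit.
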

 \begin{proof}
 	Appendix \ref{proof_time}.
 \end{proof}

\subsubsection{The Existence of the GNE} The GNE of the formulated generalized Nash game is defined as a point $ \mathbf{x}^* $ which solves $ G_p(\mathbf{x}_{-p}), \forall p \in \mathcal{P} $.
The existence of the GNE is stated by the following theorem.
\begin{thm}\label{gne_existence}
	There exists a generalized Nash equilibrium if the following conditions hold for $ p, \forall p \in \mathcal{P} $.
	\begin{itemize}
		\item $ \check{\mathcal{S}}_p, \forall p \in \mathcal{P}, $ are convex and compact sets.
		\item $ \hat{\mathcal{S}}_p, \forall p \in \mathcal{P}, $ is closed and convex.
		\item The objective function, which is $ \Delta t $, is continuous and convex with respect to $ \mathbf{x}_p $.
	\end{itemize}
\end{thm}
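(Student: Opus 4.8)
The plan is to prove existence by a Kakutani fixed‑point argument on the best‑response correspondence, which is the classical route to equilibrium existence in generalized games with coupled constraints (see the existence theory surveyed in \cite{gnep_concept}). For each player $p \in \mathcal{P}$ define the feasibility correspondence $\mathcal{K}_p(\mathbf{x}_{-p}) = \check{\mathcal{S}}_p \cap \hat{\mathcal{S}}_p(\mathbf{x}_{-p})$, i.e.\ the points of $p$'s private strategy set $\check{\mathcal{S}}_p$ that additionally satisfy the coupling constraints $h_i^{\text{local}} \le 0$ and $h_i^{\text{edge}} \le 0$ when the rivals play $\mathbf{x}_{-p}$. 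By Lemma \ref{g_0_convex} and Lemma \ref{g_i_convex}, $\check{\mathcal{S}}_p$ is compact and convex and the coupling functions are continuous and convex in $p$'s own block of variables; hence for every fixed $\mathbf{x}_{-p}$ the set $\mathcal{K}_p(\mathbf{x}_{-p})$ is the intersection of a compact convex set with finitely many convex sublevel sets, so it is compact and convex, and it is nonempty (for instance $\boldsymbol{\delta}=\mathbf{0}$ together with a sufficiently large value of the slack $\Delta t$ is always admissible, and analogously for the edge server).

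Next I would show that $\mathcal{K}_p$ is a continuous correspondence (both upper and lower semicontinuous). Upper semicontinuity follows from closedness of the graph $\{(\mathbf{x}_{-p},\mathbf{x}_p) : \mathbf{x}_p \in \mathcal{K}_p(\mathbf{x}_{-p})\}$, which is immediate from continuity of the $h$'s and compactness of the ambient sets. Lower semicontinuity is obtained from the convexity of the coupling constraints together with a Slater‑type interiority condition supplied by the model, since the strict inequalities $h_i^{\text{local}} < 0$, $h_i^{\text{edge}} < 0$ are attainable by enlarging $\Delta t$ strictly beyond the corresponding right‑hand sides; a uniformly strictly feasible point of convex constraints makes the constraint correspondence lower semicontinuous. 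With $\mathcal{K}_p$ continuous and compact‑ and convex‑valued, and with $\Delta t$ continuous in $\mathbf{x}$ and convex in $\mathbf{x}_p$, Berge's maximum theorem gives that the best‑response map $\mathcal{B}_p(\mathbf{x}_{-p}) = \arg\min_{\mathbf{x}_p \in \mathcal{K}_p(\mathbf{x}_{-p})} \Delta t(\mathbf{x})$ is nonempty‑, compact‑ and convex‑valued and upper semicontinuous with closed graph.

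Finally, form the product correspondence $\mathcal{B} = \prod_{p \in \mathcal{P}} \mathcal{B}_p$ on the nonempty compact convex set $\mathcal{X} = \prod_{p \in \mathcal{P}} \check{\mathcal{S}}_p$; it inherits nonempty convex values and a closed graph, so Kakutani's fixed‑point theorem yields $\mathbf{x}^* \in \mathcal{B}(\mathbf{x}^*)$, and by construction $\mathbf{x}^*_p$ solves $G_p(\mathbf{x}^*_{-p})$ for every $p$, i.e.\ $\mathbf{x}^*$ is a GNE. I expect the main obstacle to be the lower semicontinuity of $\mathcal{K}_p$: because each player's feasible set moves with the rivals' decisions through the coupling constraints, and a moving convex set need not vary lower‑semicontinuously in general, the argument must lean on the convexity of $h_i^{\text{local}}$ and $h_i^{\text{edge}}$ from Lemmas \ref{g_0_convex}--\ref{g_i_convex} and on the strict feasibility afforded by the time slack $\Delta t$; the other ingredients (closed graph, nonemptiness and convexity of the best‑response values, and the Kakutani step) are routine once this is in place.
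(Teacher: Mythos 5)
Your route is the right one, and in fact it is the only proof this theorem really has: the paper itself offers no argument for Theorem~\ref{gne_existence} --- it states the standard sufficient conditions for GNE existence from the GNEP literature (the result surveyed in \cite{gnep_concept}), and Lemmas~\ref{g_0_convex} and \ref{g_i_convex} are only there to verify the hypotheses. The classical proof behind that cited result is exactly the Kakutani/Berge argument you sketch: feasibility correspondence $\mathcal{K}_p$, nonempty compact convex values, continuity of $\mathcal{K}_p$, Berge's maximum theorem for the best response, and a fixed point of the product correspondence. So in spirit your proposal supplies the proof the paper delegates to the literature.

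Two points, however, are genuine gaps when the argument is instantiated on \emph{this} game, and you should not wave them away. First, $\Delta t$: in \eqref{gnep_uplink} and \eqref{gnep_offloading} the objective is $\Delta t$ itself, so for your strict-feasibility (Slater) step to make sense $\Delta t$ must be a decision variable of each player; but then the strategy set of each player includes $\Delta t$, which lives on an unbounded ray, and compactness --- the very hypothesis Kakutani needs and the first bullet of the theorem asserts --- fails unless you cap $\Delta t$ by a constant that provably never binds at a best response. Second, and more seriously, the coupling functions are not continuous on the closed sets $\check{\mathcal{S}}_0$ and $\bar{\mathcal{S}}_i$: $t_i^{\text{local}}$ and $t_i^{\text{edge}}$ blow up as $\tilde{\omega}_i \to 0$, $\bar{\omega}_i \to 0$, or $\gamma_i \to 0$, and these boundary points belong to the sets defined in the paper. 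Hence ``continuous on a compact set'' is simply false as stated, Berge's theorem cannot be applied on the full sets, and no finite cap on $\Delta t$ is uniformly valid near that boundary. To close the argument you must either restrict each allocation to $[\varepsilon,1]$ for some $\varepsilon>0$ (and argue that best responses never approach the excluded boundary, which is plausible since vanishing allocations drive the constrained times to $+\infty$), or work with extended-real-valued constraints and verify upper and lower semicontinuity of $\mathcal{K}_p$ directly there. Your own flag about lower semicontinuity is well placed, but it needs this quantitative interiority statement, not just the remark that $\Delta t$ can be enlarged; with those repairs the Kakutani argument goes through and is consistent with what the paper implicitly relies on.
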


\subsubsection{Solution Approach to GNEP}
The GNE $ \mathbf{x}_p^* $ of the formulated GNEP $ G_p(\mathbf{x}_{-p}) $ can be derived by the KKT conditions. Thus, $ \mathbf{x}_0^* \coloneqq [\boldsymbol{\tilde{\omega}}, \boldsymbol{\bar{\omega}}] $ and $ \mathbf{x}_i^* \coloneqq \delta_i, \forall i \in \mathcal{I}, $ is as follows:
\begin{align}
\tilde{\omega}_i^* = & \frac{1}{\sum_{i \in \mathcal{I}} \left[ \frac{\tilde{\lambda}_i \delta_i f(|\mathcal{D}_i|)}{R_i} \right]^{1/2}} \left[ \frac{\tilde{\lambda}_i \delta_i f(|\mathcal{D}_i|)}{R_i} \right]^{1/2}, \label{uplink_data_sol} \\
\bar{\omega}_i^* = & \frac{1}{\sum_{i \in \mathcal{I}} \left[ \frac{\bar{\lambda}_i f(|\mathbf{x}_i|)}{R_i} \right]^{1/2}} \left[ \frac{\bar{\lambda}_i f(|\mathbf{w}_i|)}{R_i} \right]^{1/2}, \label{uplink_weight_sol} \\
\delta_i^* = & \left[ \frac{f(|\mathcal{D}_i|)}{\tilde{\omega}_i R_i} + \frac{f(|\mathcal{D}_i \tau|)}{\Gamma_E} + \frac{f(|\mathcal{D}_i|) \tau}{\gamma_i \Gamma_i} \right]^{-1} \nonumber \\ & \left[ \frac{f(|\mathcal{D}_i|) \tau}{\gamma_i \Gamma_i} + \frac{f(|\mathbf{w}_i|)}{\bar{\omega}_i R_i} - \frac{\sum_{j \in \mathcal{I}, j \neq i} \delta_j f(|\mathcal{D}_j|) \tau}{\Gamma_E} \right], \label{offloading_sol}
\end{align}
where $ \tilde{\lambda}_i $ and $ \bar{\lambda}_i $ are the Lagrange multipliers associated to two coupling constraints for user $ i $.

\begin{proof}
	Appendix \ref{proof_game_sol}.
\end{proof}

The Lagrange multipliers, $ \tilde{\lambda}_i $ and $ \bar{\lambda}_i, \forall i \in \mathcal{I} $, are updated as follows.
\begin{align}
\tilde{\lambda}_i^{k} &= \left\{\begin{matrix}
\tilde{\lambda}_i^{k-1} + \Delta_i &  \text{if } e_i > \Delta e_i, \\ 
\tilde{\lambda}_i^{k-1} & \text{if } e_i \leq \Delta e_i,
\end{matrix}\right. \label{equ:edge_lambda_update} \\
\bar{\lambda}_i^{k} &= 1 - \tilde{\lambda}_i^{k}, \label{equ:locallambda_update}
\end{align}
where $ \Delta_i $ is an increment parameter of user $ i $.
As defined in \eqref{uplink_data_sol} and \eqref{uplink_weight_sol}, the Lagrange multipliers $ \tilde{\lambda}_i $ and $ \bar{\lambda}_i $ act as a weight parameter to the proportional resource allocation of the uplink bandwidth. Since $ e_i $ is decreasing in $ \tilde{\omega}_i $ and $ \bar{\omega}_i $, the energy limitation of user $ i $ can be satisfied by allocating more uplink bandwidth to user $ i $ in the dataset offloading which has the higher energy consumption than the weight transmission.

\subsection{Energy-aware Resource Management Algorithm}
The energy-aware resource management algorithm is proposed for the joint learning, dataset offloading, computing, and uplink resource management for the MEC-enabled FL which works as follows. First, the initial value for the Lagrange multipliers, dataset offloading, computing and uplink resource allocation are chosen. Users offload a portion of their local dataset to the edge server as stated at line \ref{alg_line:offload_initial}. Users and the edge server perform the model training simultaneously as defined at line \ref{alg_line:train_initial}. Once the weight update of the all users is received at the edge server, the model aggregation is performed as stated at line \ref{alg_line:aggregate_initial}. The Lagrange multipliers, the dataset offloading, computing and uplink resource allocation are then updated. In addition, the edge server and the users perform the model training and the updated weights are aggregated. This process is repeated until convergence as defined in lines \ref{alg_line:loop_beginning}-\ref{alg_line:loop_end}. Since all mobile users and the edge server implement the best response strategy, the proposed algorithm will converge to a stationary point.
\begin{algorithm}[t]
	\caption{Energy-aware Resource Management Algorithm}\label{alg:resource_management}
	\floatname{algorithm}{Procedure}
	\begin{algorithmic}[1]
		\State Choose an initial value for the Lagrange multipliers $ \tilde{\lambda}_i^0, \bar{\lambda}_i^0, \forall i \in \mathcal{I} $.
		\State $ k \gets 0 $. 
		\State Choose an initial value for $ \delta_i^k, \gamma_i^k, \tilde{\omega}_i^k, \bar{\omega}_i^k, \forall i \in \mathcal{I} $. \label{alg:initial}
		\State User $ i, \forall i \in \mathcal{I} $ offload $ \delta_i^k $ of its dataset to the edge server. \label{alg_line:offload_initial}
		\State User $ i, \forall i \in \mathcal{I}, $ and the edge server perform the model training simultaneously. \label{alg_line:train_initial}
		\State User $ i, \forall i \in \mathcal{I} $ uploads its weight parameters.
		\State The model aggregation is performed at the edge server. \label{alg_line:aggregate_initial}
		\Repeat
			\State $ k \gets k + 1 $. \label{repeat} \label{alg_line:loop_beginning}
			\State At user $ i, \forall i \in \mathcal{I} $,
			\State $ \gamma_i^k \gets \gamma_i^* $ as defined in \eqref{computing_resource_sol}.
			\State $ \delta_i^k \gets \delta_i^* $ as defined in \eqref{offloading_sol}.
			\State $ \delta_i^k $ of the dataset is offloaded to the edge server regarding $ \tilde{\omega}_i^{k-1} $. 
			\State The model training is performed regarding $ \delta_i^k $ and $ \gamma_i^k $.
			\State $ \mathbf{w}_i^k $ is uploaded to the edge server regarding $ \bar{\omega}_i^k $.
			\State At the edge server,
			\State $ \tilde{\lambda}_i^k, \bar{\lambda}_i^k, \forall i \in \mathcal{I}, $ are updated as defined in \eqref{equ:edge_lambda_update} and \eqref{equ:locallambda_update}.
			\State $ \tilde{\omega}_i^k \gets \tilde{\omega}_i^*, \forall i \in \mathcal{I}, $ as defined in \eqref{uplink_data_sol}.
			\State $ \bar{\omega}_i^k \gets \bar{\omega}_i^*, \forall i \in \mathcal{I}, $ as defined in \eqref{uplink_weight_sol}.
			\State The model training is performed regarding $ \boldsymbol{\delta}^k $.
			\State The model aggregation is performed. \label{alg_line:loop_end}
		\Until{$ \abs{\hat{l}^k - \hat{l}^{k-1}}\leq \epsilon $ and $ \abs{ t^{\text{total}, k} - t^{\text{total}, k - 1} } \leq \epsilon $.}
	\end{algorithmic}
\end{algorithm}
The loss value $ \hat{l} $ is defined as $ \hat{l} = \sum_{j \in \hat{\mathcal{D}}} l(\bar{\mathbf{w}}, x_j, y_j) $, where it calculates the testing loss of the final model on the test dataset $ \hat{\mathcal{D}} $.

\section{Simulation Results}\label{sec:simulation_results}
We consider a single-cell macro base station deployed together with an edge server for the model training and aggregation. MINIST dataset is used for the model training where the logistic regression is performed on 50 mobile users. The data samples in the whole dataset are randomly shuffled and distributed among users where each user has approximately 1,200 data samples. The total available uplink bandwidth is considered as 20 MHz. The edge server is equipped with 16 GHz CPU. We consider the system heterogeneity in the simulation where the mobile users have different CPU frequency and energy limitation where the CPU frequency of the mobile users follows a uniform distribution of [1.2, 1.5] GHz. The energy limit of the mobile users is considered to follow a uniform distribution as well which is [45, 60] watt. We compare the traditional FL and proposed MEC-enabled FL where the dataset offloading for the edge training is not allowed in the traditional FL which uses all the data samples in the local dataset for the training. The loss value in the figures is the testing loss on the final model. For the proposed energy-aware resource management algorithm, the initial points for the dataset offloading and computing resource allocation are chosen randomly while the uniform allocation is performed on the uplink bandwidth resources. To compare the traditional and proposed MEC-enabled FL in terms of the size of offloaded dataset and computing resource, the uniform allocation is used for the uplink bandwidth resource management.

\begin{figure}[t!] 
	\centering
	\includegraphics[keepaspectratio=true,scale=0.26]{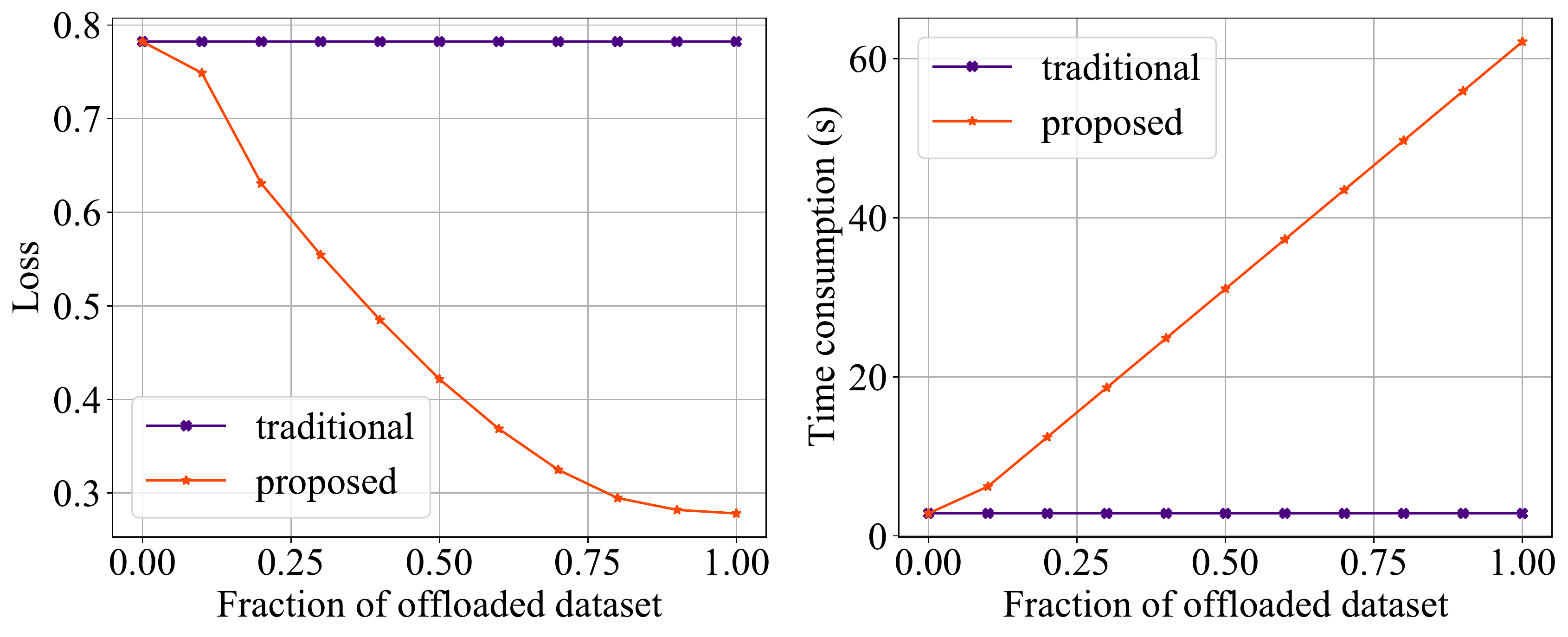}
	\caption{Loss and time consumption with respect to the fraction of offloaded data.}
	\label{fig:loss_time_proposed_tradition}
\end{figure}

Fig. \ref{fig:loss_time_proposed_tradition} shows the comparison of the traditional and proposed FL on the size of offloaded dataset regarding loss and total time consumption. Since the dataset offloading is not allowed in the traditional FL, the loss and total time consumption are constant across the size of the offloaded dataset. The proposed MEC-enabled FL can achieve the better model performance by offloading the portion of local dataset where the edge server performs the model training on all the local dataset offloaded from the user. The proposed MEC-enabled FL is same as the traditional FL when the fraction of the offloaded data is zero which means the mobile users use all the dataset for the local training. When the fraction is 1, the proposed FL is same as the centralized model training where all the local datasets are used for the model training at edge server. Since the users need to upload the local datasets to the edge server, the total time consumption of the proposed model is higher than the traditional FL but it can be minimized by the decent resource management approach.

\begin{figure}[t!]
	\centering
	\includegraphics[keepaspectratio=true,scale=0.228]{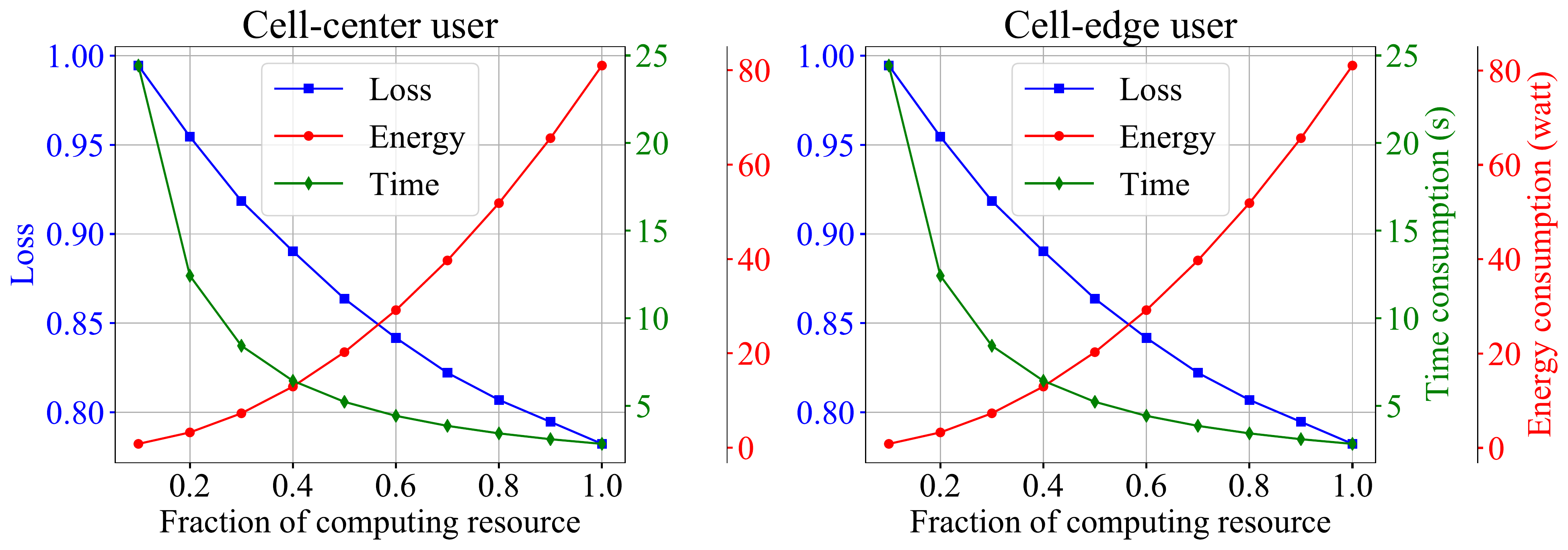}
	\caption{Loss, energy, and time consumption with respect to the fraction of computing resource allocated.}
	\label{fig:loss_energy_time_wrt_computing_resource}
\end{figure}

\begin{figure}[t!]
	\centering
	\includegraphics[keepaspectratio=true,scale=0.26]{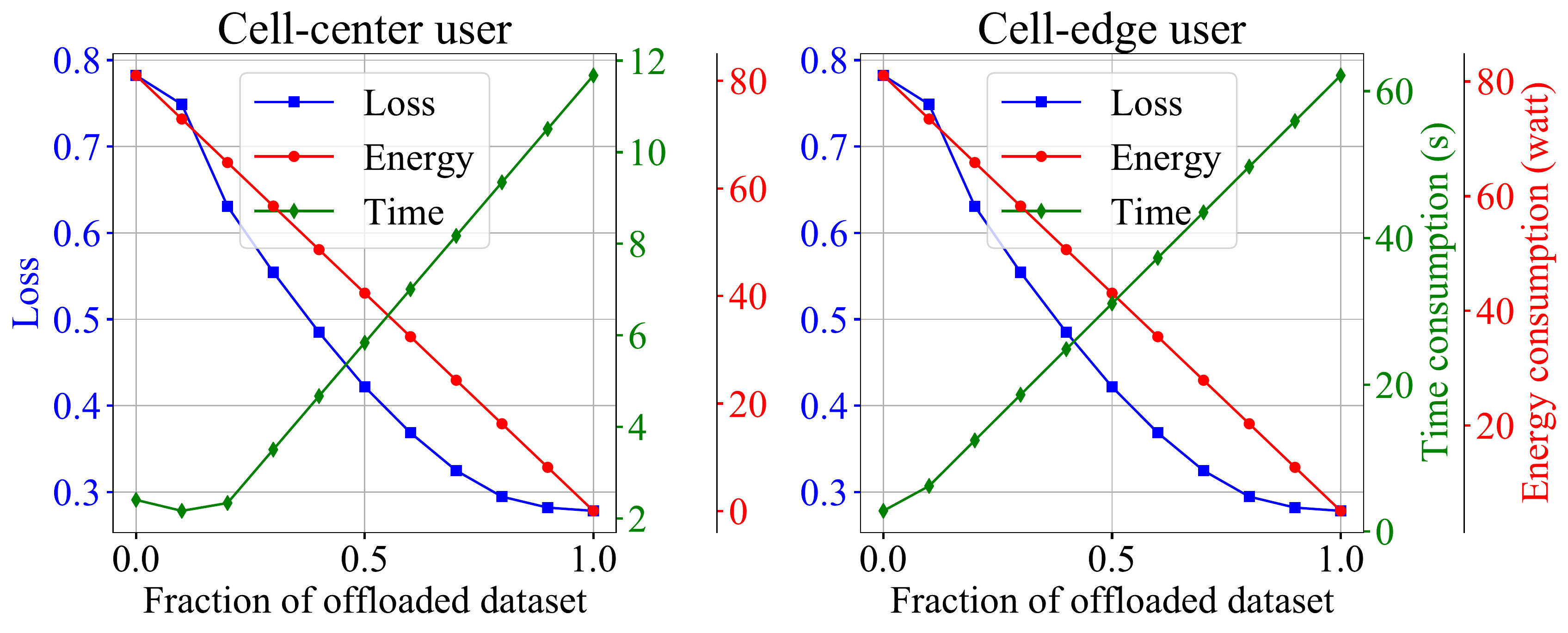}
	\caption{Loss, energy, and time consumption with respect to the fraction of offloaded data.}
	\label{fig:loss_energy_time_wrt_data_offload}
\end{figure}

Fig. \ref{fig:loss_energy_time_wrt_computing_resource} and Fig. \ref{fig:loss_energy_time_wrt_data_offload} show the testing loss, energy and time consumption of the proposed model with respect to the computing resources and dataset offloading for the cell-center and cell-edge users. As more computing resources are used for the local model training, the loss and time consumption is decreased. However, the energy consumption of the mobile device will increase with the computing resource allocation. As for the amount of offloaded dataset, the loss and energy decreases as the size of offloaded dataset increases since the energy consumption of the model training is much higher than that of transmissions. But, the total time consumption is increased with respect to the offloaded dataset due to the time taken for dataset offloading. The total time taken for the cell-center user is much lower than that of the cell-edge user. Since the synchronous update is used for the model aggregation, the cell-edge user has the high impact on the total time consumption.

\begin{figure}[t!]
	\centering
	\includegraphics[keepaspectratio=true,scale=0.26]{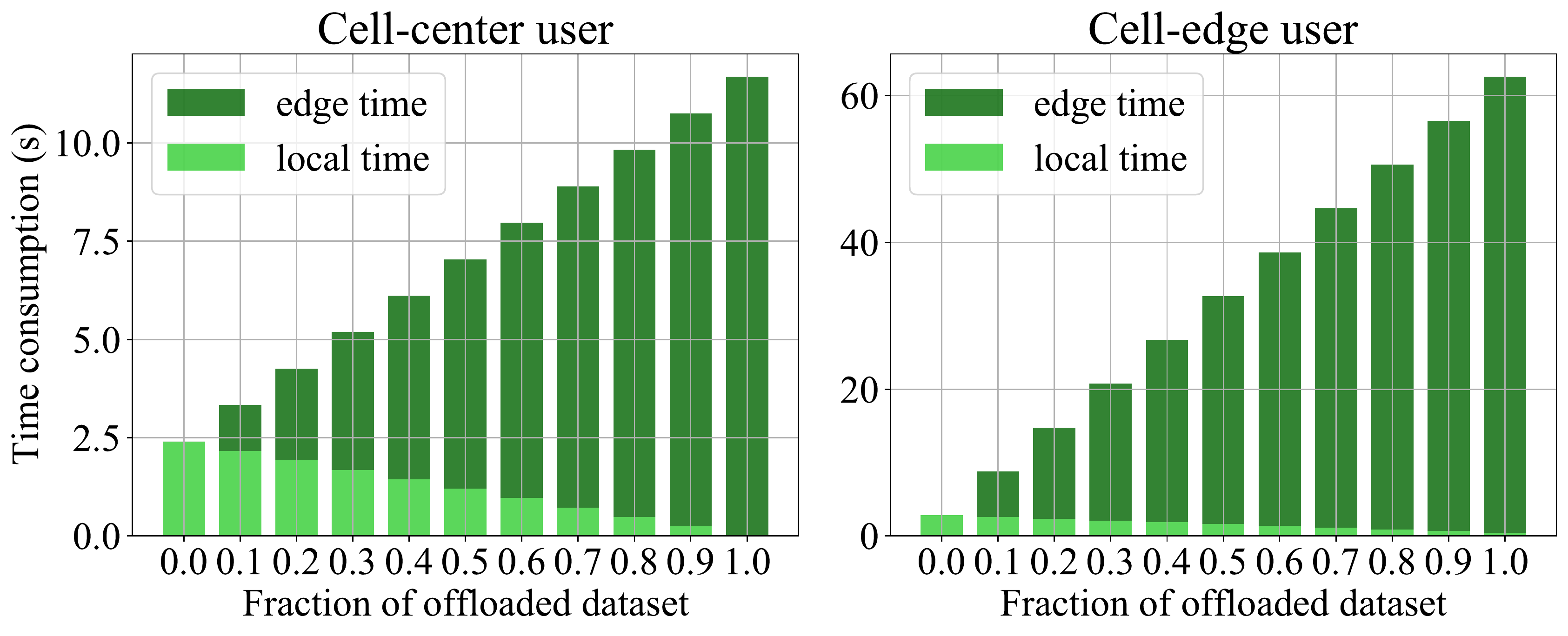}
	\caption{Comparison of local time taken for (i) local training, (ii) weight transmission, and edge time taken for (i) dataset offloading, (ii) edge training, with respect to the fraction of offloaded data.}
	\label{fig:time_comparison_wrt_offloaded_data}
\end{figure}

Fig. \ref{fig:time_comparison_wrt_offloaded_data} shows the time taken for the local and edge training for the cell-center and cell edge user. The time taken for the local training gets lower than that for the edge training as the offloaded data size increases. Due to the poor channel condition, the cell-edge user needs higher time consumption in the dataset offloading than the cell-center user. Thus, the decent resource management approach is required to minimize the total time consumption.

\begin{figure}[t!]
	\centering
	\includegraphics[keepaspectratio=true,scale=0.26]{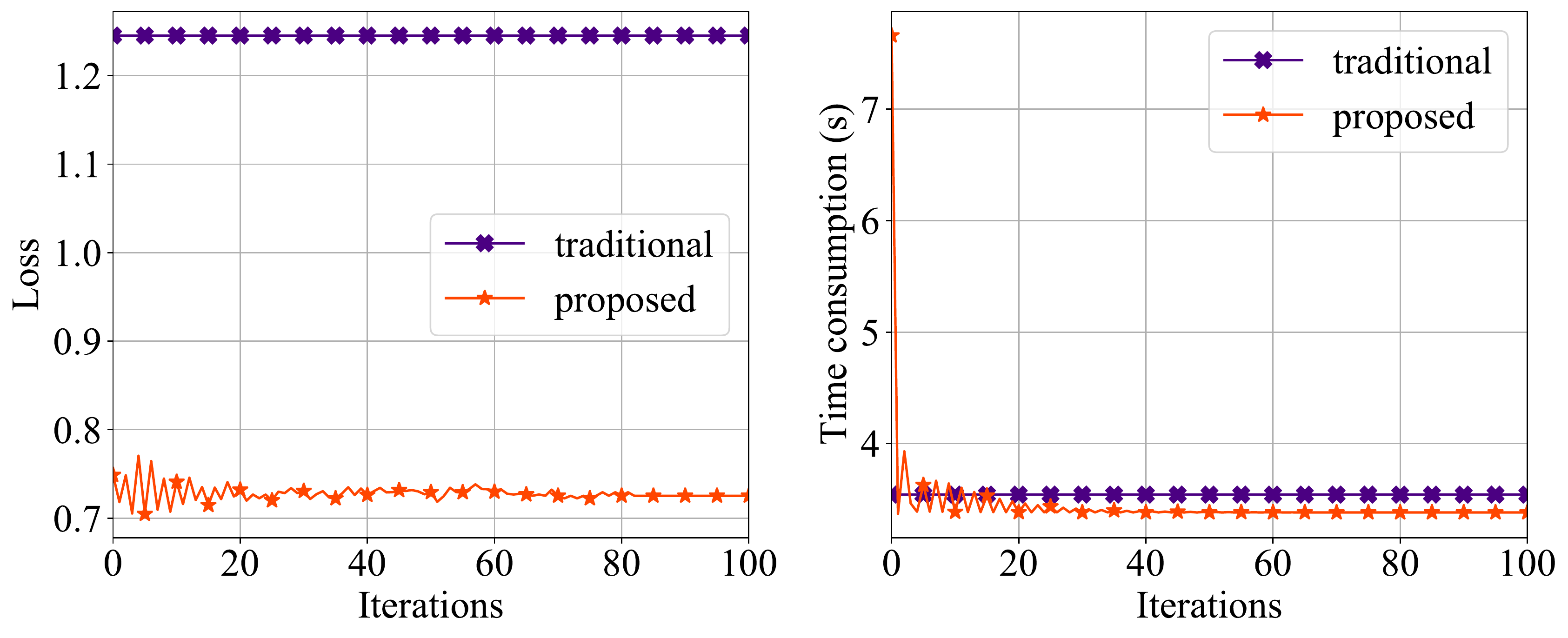}
	\caption{Comparison of traditional and proposed MEC-enabled FL in loss and time consumption.}
	\label{fig:loss_time_iterations}
\end{figure}

Fig. \ref{fig:loss_time_iterations} shows the comparison of the traditional and proposed FL on Algorithm \ref{alg:resource_management} where the algorithm converges to a stationary point after a few iterations. The proposed MEC-enabled FL performs better than the traditional FL since the offloaded local datasets are trained collectively at the edge server. With the proposed resource management approach, the MEC-enabled FL can achieve lower time consumption than the traditional FL. Fig. \ref{fig:loss_energy_time_convergence} shows the convergence of the algorithm in the final model loss, total time consumption and individual energy consumption of the cell-center and cell-edge user. The energy consumption of the cell-center user fluctuates more than that of cell-edge user where the energy limit of the mobile users are guaranteed eventually.

\begin{figure}[t!]
	\centering
	\includegraphics[keepaspectratio=true,scale=0.22]{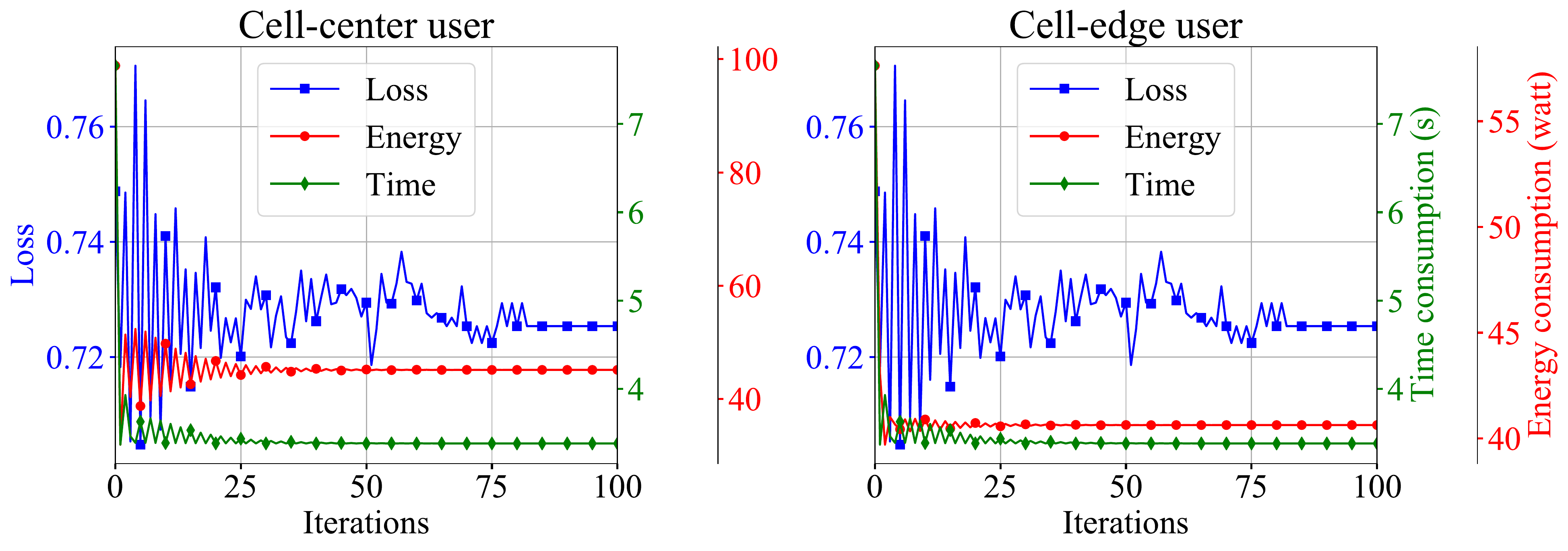}
	\caption{Convergence of the algorithm in loss, energy and time consumption.}
	\label{fig:loss_energy_time_convergence}
\end{figure}

\begin{figure}[t!] 
	\centering
	\includegraphics[keepaspectratio=true,scale=0.26]{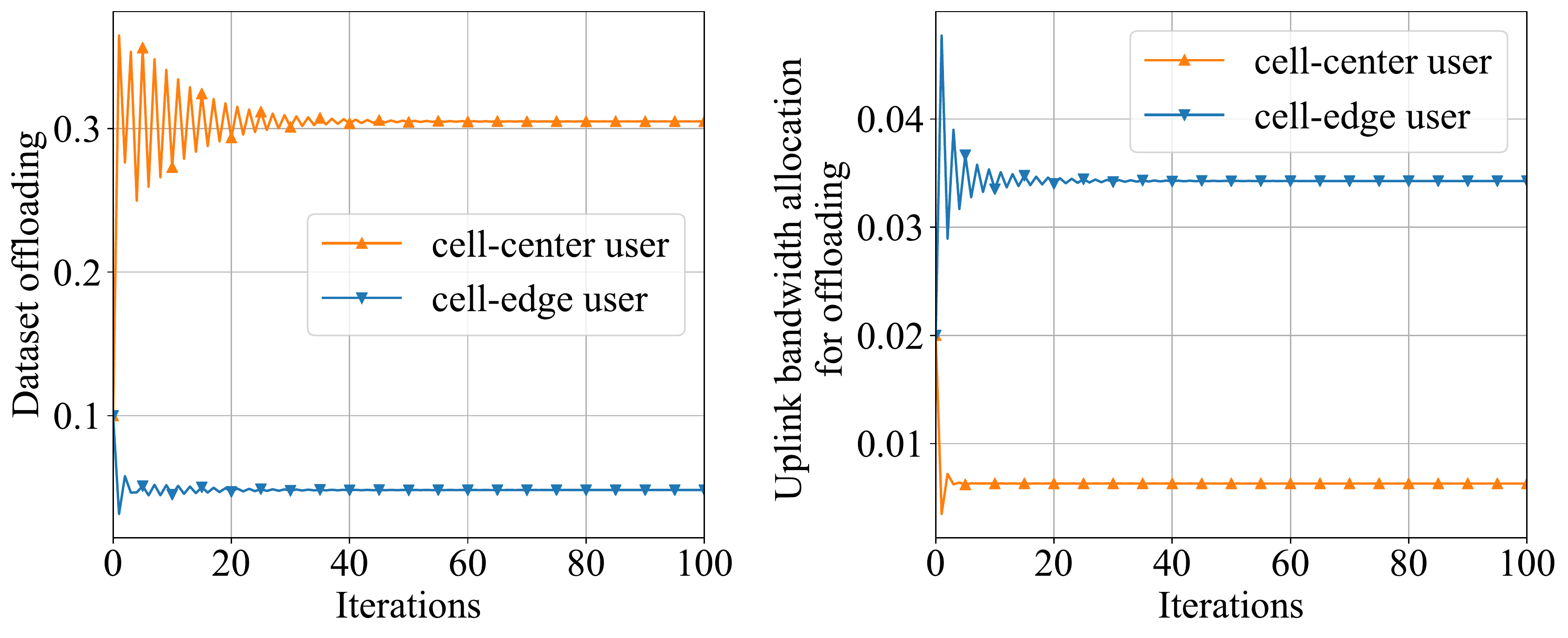}
	\caption{Convergence of the algorithm in data offloading and uplink bandwidth allocation for offloading.}
	\label{fig:data_offloading_uplink_convergence}
\end{figure}

\begin{figure}[t!] 
	\centering
	\includegraphics[keepaspectratio=true,scale=0.26]{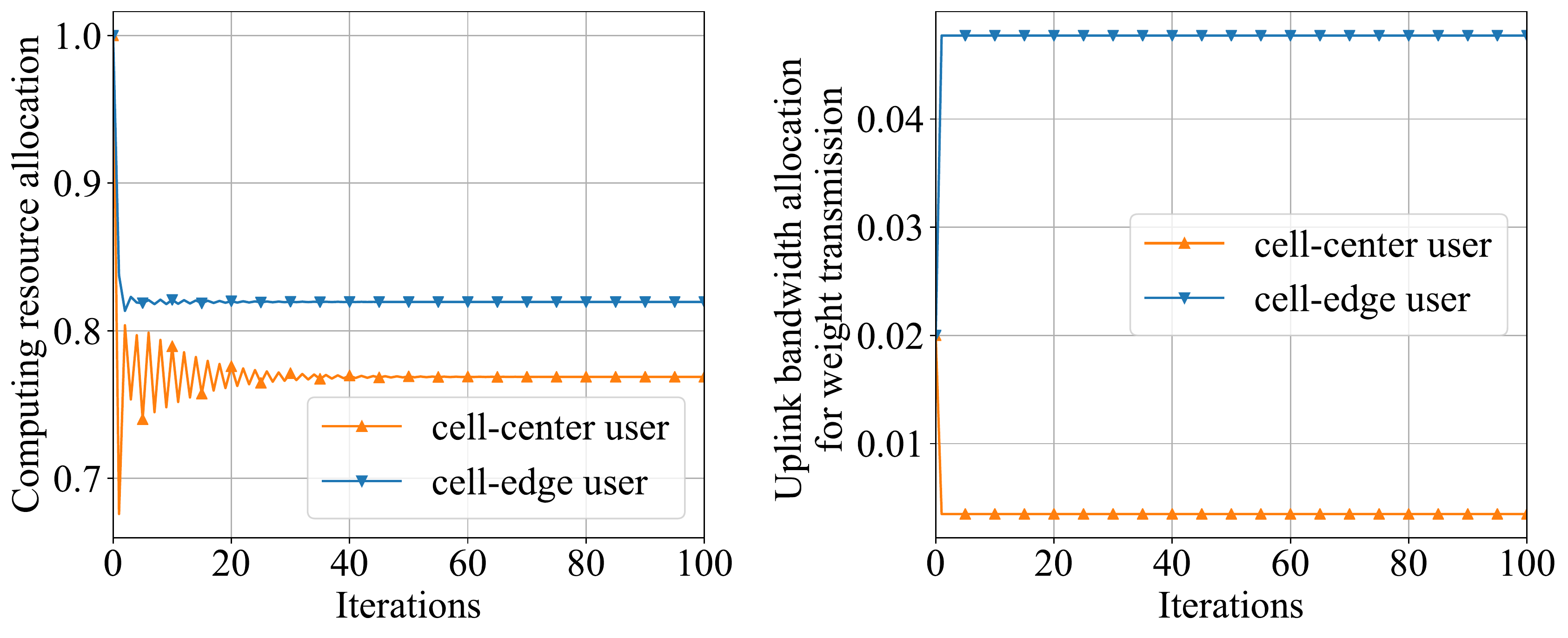}
	\caption{Convergence of the algorithm in computing resource and uplink bandwidth allocation for weight transmission.}
	\label{fig:computing_resource_uplink_convergence}
\end{figure}

Fig. \ref{fig:data_offloading_uplink_convergence} and Fig. \ref{fig:computing_resource_uplink_convergence} shows the comparison of the resource management of the Algorithm \ref{alg:resource_management} at the cell-center and cell-edge users. Due to the poor channel condition of the cell-edge user, less local dataset is offloaded to the edge server while more uplink resource is required for the dataset offloading than the cell-center user in order to minimize the total time consumption. Similar to the uplink resource allocation for the dataset offloading, the cell-edge user require more uplink bandwidth for the weight transmission than the cell-center user to reduce the total time consumption. Since the cell-edge user offload a small portion of its local datasets to the edge server, the higher amount of computing resource is required for the local training than that of the cell-center user which is shown in Fig. \ref{fig:computing_resource_uplink_convergence}.

\section{Conclusion}\label{sec:conclusion}
In this paper, the energy-aware resource management for the MEC-enabled FL is proposed. In particular, the mobile users are allowed to offload a portion of their local dataset to the MEC server; and hence, the tradeoff between the performance of the training model and the energy consumption at user devices with respect to the amount of data samples used for the local training is handled. To that end, an energy-aware  resource  management  problem  is formulated with the objective of minimizing the training loss and time consumption, while satisfying the device's energy constraints. The formulated problem is decoupled into multiple sub-problems due to the coupling between the decision variables. Then, the solution for the computing resource management is derived by ensuring the energy budget of the mobile users. Moreover, the problem of dataset offloading and uplink resource management is formulated as a GNEP, and the existence of a GNE is derived. The solution to the dataset offloading and uplink bandwidth allocation is derived to minimize the total time consumption. To that end, the energy-aware resource management algorithm is proposed. Finally, extensive simulations are performed which show that the total time consumption of the proposed MEC-enabled FL model is competitively lower than the traditional FL approach when adopting the proposed resource management algorithm.


%
\appendices
\section{Proof of Convexity}\label{proof_convexity}
\subsection{Energy Consumption of users}\label{proof_energy}
The first derivatives of the total energy consumption of user $ i $ with respect to $ \delta_i $, $ \gamma_i $, $ \tilde{\omega}_i $, and $ \bar{\omega}_i $ are as follows:
\begin{equation*}
\begin{aligned}
\frac{\partial e_i}{\partial \delta_i} &= \frac{p_i f(|\mathcal{D}_i|)}{\tilde{\omega}_i R_i} - \psi f(|\mathcal{D}_i|) \tau \left( \gamma_i \Gamma_i \right)^2, \\
\frac{\partial e_i}{\partial \gamma_i} &= 2 \psi (1 - \delta_i) f(|\mathcal{D}_i|) \tau \gamma_i \Gamma_i^2, \\
\frac{\partial e_i}{\partial \tilde{\omega}_i} &= - \frac{p_i \delta_i f(|\mathcal{D}_i|)}{\tilde{\omega}_i^2 R_i}, \\
\frac{\partial e_i}{\partial \bar{\omega}_i} &= - \frac{p_i f(|\mathbf{w}_i|)}{\bar{\omega}_i^2 R_i}.
\end{aligned}
\end{equation*}

The second derivatives of $ e_i $ with respect to $ \delta_i $, $ \gamma_i $, $ \tilde{\omega}_i $, and $ \bar{\omega}_i $ are as follows:
\begin{equation*}
\begin{aligned}
\frac{\partial^2 e_i}{\partial \delta_i^2} &= 0, \\
\frac{\partial^2 e_i}{\partial \gamma_i^2} &= 2 \psi (1 - \delta_i) f(|\mathcal{D}_i|) \tau \Gamma_i^2 \geq 0, \\
\frac{\partial^2 e_i}{\partial \tilde{\omega}_i^2} &= \frac{2 p_i \delta_i f(|\mathcal{D}_i|)}{\tilde{\omega}_i^3 R_i} \geq 0, \\
\frac{\partial^2 e_i}{\partial \bar{\omega}_i^2} &= \frac{2 p_i f(|\mathbf{w}_i|)}{\bar{\omega}_i^3 R_i} \geq 0.
\end{aligned}
\end{equation*}
The second derivatives of $ e_i $ are non-negative which are $ \partial^2 e_i \geq 0 $ because $ 0 \leq \delta_i \leq 1 $, $ 0 \leq \tilde{\omega}_i \leq 1 $ and $ 0 \leq \bar{\omega}_i \leq 1 $.
Thus, $ e_i $ is convex with respect to $ \delta_i $, $ \gamma_i $, $ \tilde{\omega}_i $ and $ \bar{\omega}_i $.

\subsection{Time Consumption}\label{proof_time}
The time consumption for the edge model training of user $ i $ includes the time taken for the datasets offloading and model traing at the edge server which is defined as
\begin{equation*}
t_i^{\text{edge}} = \frac{\delta_i f(|\mathcal{D}_i|)}{\tilde{\omega}_i R_i} + \frac{\sum_{i \in \mathcal{I}} \delta_i f(|\mathcal{D}_i|) \tau}{\Gamma_E}.
\end{equation*}
The time taken for the edge training $ t_i^{\text{edge}} $ is affected by only the dataset offloading $ \delta_i $ and $ \tilde{\omega}_i $. Thus, the first derivatives of $ t_i^{\text{edge}} $ with respect to $ \delta_i $ and $ \tilde{\omega}_i $ are
\begin{equation*}
\begin{aligned}
\frac{\partial t_i^{\text{edge}}}{\partial \delta_i} &= \frac{f(|\mathcal{D}_i|)}{\tilde{\omega}_i R_i} + \frac{f(|\mathcal{D}_i|) \tau}{\Gamma_E}, \\
\frac{\partial t_i^{\text{edge}}}{\partial \tilde{\omega}_i} &= - \frac{\delta_i f(|\mathcal{D}_i|)}{\tilde{\omega}_i^2 R_i}.
\end{aligned}
\end{equation*}
The second derivatives of $ t_i^{\text{edge}} $ with respect to $ \delta_i $ and $ \tilde{\omega}_i $ are
\begin{equation*}
\begin{aligned}
\frac{\partial^2 t_i^{\text{edge}}}{\partial \delta_i^2} &= 0, \\
\frac{\partial^2 t_i^{\text{edge}}}{\partial \tilde{\omega}_i^2} &= \frac{2 \delta_i f(|\mathcal{D}_i|)}{\tilde{\omega}_i^3 R_i} \geq 0.
\end{aligned}
\end{equation*}
Thus, $ t_i^{\text{edge}} $ is convex with respect to $ \delta_i $ and $ \tilde{\omega}_i $.

The time taken for the local model training $ t_i^{\text{local}} $ is affected by the dataset offloading $ \delta_i $, the computing resource allocation $ \gamma_i $, and the uplink bandwidth allocation for the weight transmission $ \bar{\omega}_i $. Thus, the first derivatives of $ t_i^{\text{local}} $ with respect to $ \delta_i, \gamma_i $, and $ \bar{\omega}_i $ are
\begin{equation*}
\begin{aligned}
\frac{\partial t_i^{\text{local}}}{\partial \delta_i} &= - \frac{f(|\mathcal{D}_i|) \tau}{\gamma_i \Gamma_i}, \\
\frac{\partial t_i^{\text{local}}}{\partial \gamma_i} &= - \frac{(1 - \delta_i) f(|\mathcal{D}_i|) \tau}{\gamma_i^2 \Gamma_i}, \\
\frac{\partial t_i^{\text{local}}}{\partial \bar{\omega}_i} &= - \frac{f(|\mathbf{w}_i|)}{\bar{\omega}_i^2 R_i}.
\end{aligned}
\end{equation*}
The second derivatives of $ t_i^{\text{local}} $ with respect to $ \delta_i, \gamma_i $, and $ \bar{\omega}_i $ are
\begin{equation*}
\begin{aligned}
\frac{\partial^2 t_i^{\text{local}}}{\partial \delta_i^2} &= 0, \\
\frac{\partial^2 t_i^{\text{local}}}{\partial \gamma_i^2} &= \frac{2 (1 - \delta_i) f(|\mathcal{D}_i|) \tau}{\gamma_i^3 \Gamma_i} \geq 0, (\because 0 \leq \gamma_i \leq 1), \\
\frac{\partial^2 t_i^{\text{local}}}{\partial \bar{\omega}_i^2} &= \frac{2 f(|\mathbf{w}_i|)}{\bar{\omega}_i^3 R_i} \geq 0.
\end{aligned}
\end{equation*}
Thus, $ t_i^{\text{local}} $ is convex in $ \delta_i, \gamma_i $ and $ \bar{\omega}_i $.

\section{Derivation of Optimal Solutions}\label{optimal_solutions}
\subsection{Optimal Computing Resource Derivation}\label{proof_computing_resource_sol}
To derive the closed form solution of the computing resource management $ \gamma_i $ of user $ i $, the Lagrangian of \eqref{problem_computing_resource} is defined as follows:
\begin{equation*}
\begin{aligned}
\mathcal{L}(\gamma_i) =& \frac{(1 - \delta_i) f(|\mathcal{D}_i|) \tau}{\gamma_i \Gamma_i} + \frac{f(|\mathbf{w}_i|)}{\bar{\omega}_i R_i}  \\ &+ \beta \left[ \psi (1 - \delta_i) f(|\mathcal{D}_i|) \tau (\gamma_i \Gamma_i)^2 - \Delta e_i \right].
\end{aligned}
\end{equation*}
The first derivative of $ \mathcal{L}(\gamma_i) $ with respect to $ \gamma_i $ is
\begin{equation*}
\frac{\partial \mathcal{L}(\gamma_i)}{\partial \gamma_i} = - \frac{(1 - \delta_i) f(|\mathcal{D}_i|) \tau}{\gamma_i^2 \Gamma_i} + 2 \beta \psi (1 - \delta_i) f(|\mathcal{D}_i|) \tau \gamma_i \Gamma_i^2.
\end{equation*}
By setting the first derivative of $ \mathcal{L}(\gamma_i) $ to zero, which is $ \frac{\partial \mathcal{L}(\gamma_i)}{\partial \gamma_i} = 0 $, 
\begin{equation*}
\gamma_i = \left[ \frac{(1- \delta_i) f(|\mathcal{D}_i|)}{2 \beta \psi (1 - \delta_i) f(|\mathcal{D}_i|) \Gamma_i^3} \right]^{1/3}.
\end{equation*}
If $ \beta = 0 $, $ \gamma_i $ would be undefined. Thus, $ \beta $ must be greater than zero which makes the following condition: $ e_i^{\text{total}} = \Delta e_i $ according to the KKT conditions. Thus, the computing resource management of user $ i $ can be derived from $ e_i^{\text{total}} = \Delta e_i $ as follows:
\begin{equation*}
\gamma_i = \left[ \frac{\Delta e_i - p_i \left( \frac{\delta_i f(|\mathcal{D}_i|)}{\tilde{\omega}_i R_i} + \frac{f(|\mathbf{w}_i|)}{\bar{\omega}_i R_i}\right)}{\psi (1 - \delta_i) f(|\mathcal{D}_i|) \tau (\Gamma_i)^2} \right]^{1/2}.
\end{equation*}

\subsection{Optimal Dataset Offloading and Uplink Bandwidth Derivation}\label{proof_game_sol}
The Lagrangian of \eqref{gnep_offloading} is defined as follows.
\begin{equation*}
\begin{aligned}
\mathcal{L}(\delta_i) =& \Delta t + \tilde{\lambda}_i \left[ \frac{\delta_i f(|\mathcal{D}_i|)}{\tilde{\omega}_i R_i} + \frac{\sum_{i \in \mathcal{I}} \delta_i f(|\mathcal{D}_i|) \tau}{\Gamma_E} - \Delta t \right] \\ &+ \bar{\lambda}_i \left[ \frac{(1 - \delta_i) f(|\mathcal{D}_i|) \tau}{\gamma_i \Gamma_i} + \frac{f(|\mathbf{w}_i|)}{\bar{\omega}_i R_i} - \Delta t \right].
\end{aligned}
\end{equation*}
The first derivatives of $ \mathcal{L}(\delta_i) $ with respect to $ \delta_i $ is as follows.
\begin{equation*}
\frac{\partial \mathcal{L}(\delta_i)}{\partial \delta_i} = \frac{\tilde{\lambda}_i f(|\mathcal{D}_i|)}{\tilde{\omega}_i R_i} + \frac{\tilde{\lambda}_i f(|\mathcal{D}_i|) \tau}{\Gamma_E} - \frac{\bar{\lambda}_i f(|\mathcal{D}_i|) \tau}{\gamma_i \Gamma_i},
\end{equation*}
where the Lagrangian, $ \mathcal{L}(\delta_i) $ is linear in $ \delta_i $.

To derive the solution of $ \tilde{\omega}_i, \bar{\omega}_i $, the Langrangian of \eqref{gnep_uplink} is defined as follows.
\begin{equation*}
\begin{aligned}
\mathcal{L}(\boldsymbol{\tilde{\omega}}, \boldsymbol{\bar{\omega}}) =& \Delta t + \sum_{i \in \mathcal{I}} \tilde{\lambda}_i \left[ \frac{\delta_i f(|\mathcal{D}_i|)}{\tilde{\omega}_i R_i} + \frac{\sum_{i \in \mathcal{I}} \delta_i f(|\mathcal{D}_i|) \tau}{\Gamma_E} - \Delta t \right] \\ &+ \sum_{i \in \mathcal{I}} \bar{\lambda}_i \left[ \frac{(1 - \delta_i) f(|\mathcal{D}_i|) \tau}{\gamma_i \Gamma_i} + \frac{f(|\mathbf{w}_i|)}{\bar{\omega}_i R_i} - \Delta t \right] \\ & + \tilde{\mu} \left[ \sum_{i \in \mathcal{I}} \tilde{\omega}_i - 1 \right] + \bar{\mu} \left[ \sum_{i \in \mathcal{I}} \bar{\omega}_i - 1 \right].
\end{aligned}
\end{equation*}
The first derivatives of $ \mathcal{L}(\boldsymbol{\tilde{\omega}}, \boldsymbol{\bar{\omega}}) $ with respect to $ \tilde{\omega}, \bar{\omega} $ are
\begin{align*}
\frac{\partial \mathcal{L}(\boldsymbol{\tilde{\omega}}, \boldsymbol{\bar{\omega}})}{\partial \tilde{\omega}_i} &= - \frac{\tilde{\lambda}_i \delta_i f(|\mathcal{D}_i|)}{\tilde{\omega}_i^2 R_i} + \tilde{\mu}, \\
\frac{\partial \mathcal{L}(\boldsymbol{\tilde{\omega}}, \boldsymbol{\bar{\omega}})}{\partial \bar{\omega}_i} &= - \frac{\bar{\lambda}_i f(|\mathbf{w}_i|)}{\bar{\omega}_i^2 R_i} + \bar{\mu}.
\end{align*}
By setting the first derivatives of $ \mathcal{L}(\boldsymbol{\tilde{\omega}}, \boldsymbol{\bar{\omega}}) $ to zero, which are $ \frac{\partial \mathcal{L}(\boldsymbol{\tilde{\omega}}, \boldsymbol{\bar{\omega}})}{\partial \tilde{\omega}_i} = 0 $ and $ \frac{\partial \mathcal{L}(\boldsymbol{\tilde{\omega}}, \boldsymbol{\bar{\omega}})}{\partial \bar{\omega}_i} = 0 $, the following equations can be derived.
\begin{align}
\tilde{\omega}_i &= \left[ \frac{\tilde{\lambda}_i \delta_i f(|\mathcal{D}_i|)}{\tilde{\mu} R_i} \right]^{1/2}, \label{uplink_1_derive} \\
\bar{\omega}_i &= \left[ \frac{\bar{\lambda}_i f(|\mathbf{w}_i|)}{\bar{\mu} R_i} \right]^{1/2}. \label{uplink_2_derive}
\end{align}
As defined in \eqref{uplink_1_derive} and \eqref{uplink_2_derive}, if $ \tilde{\lambda}_i $ and $ \bar{\lambda}_i $ are zero, $ \tilde{\omega}_i $ and $ \bar{\omega}_i $ are zero. The uplink resource allocation $ \tilde{\omega}_i $ cannot be zero since user $ i $ needs to offload $ \delta_i $ portion of the local dataset unless $ \delta_i = 0 $. The uplink allocation for the weight transmission $ \bar{\omega}_i $ cannot be zero since user $ i $ always need to upload its weight parameter. Thus, $ \tilde{\lambda}_i \neq 0 $ and $ \bar{\lambda} \neq 0 $.
In case of $ \tilde{\mu} $ and $ \bar{\mu} $, the uplink resource allocations, $ \tilde{\omega}_i $ and $ \bar{\omega}_i $, are undefined if $ \tilde{\mu} $ and $ \bar{\mu} $ are zero, respectively.
Thus, $ \tilde{\mu} \neq 0 $ and $ \bar{\mu} \neq 0 $.
According to the KKT conditions, if $ \tilde{\mu} \neq 0 $ and $ \bar{\mu} \neq 0 $, we have the following conditions.
\begin{align}
\sum_{i \in \mathcal{I}} \tilde{\omega}_i &= 1, \label{uplink_1_con} \\
\sum_{i \in \mathcal{I}} \bar{\omega}_i &= 1. \label{uplink_2_con}
\end{align}

By substituting \eqref{uplink_1_derive} and \eqref{uplink_2_derive} into \eqref{uplink_1_con} and \eqref{uplink_2_con}, the values of $ \tilde{\mu} $ and $ \bar{\mu} $ are derived as follows.
\begin{align*}
\tilde{\mu} &= \left[ \sum_{i \in \mathcal{I}} \left[ \frac{\tilde{\lambda}_i \delta_i f(|\mathcal{D}_i|)}{R_i} \right]^{1/2} \right]^2, \\
\bar{\mu} &= \left[ \sum_{i \in \mathcal{I}} \left[ \frac{\bar{\lambda}_i f(|\mathbf{w}_i|)}{R_i} \right]^{1/2} \right]^2.
\end{align*}
Thus, the optimal values for the uplink bandwidth allocations can be derived as follows.
\begin{align*}
\tilde{\omega}_i &= \frac{1}{\sum_{i \in \mathcal{I}} \left[ \frac{\tilde{\lambda}_i \delta_i f(|\mathcal{D}_i|)}{R_i} \right]^{1/2}} \left[ \frac{\tilde{\lambda}_i \delta_i f(|\mathcal{D}_i|)}{R_i} \right]^{1/2}, \\
\bar{\omega}_i &= \frac{1}{\sum_{i \in \mathcal{I}} \left[ \frac{\bar{\lambda}_i f(|\mathbf{w}_i|)}{R_i} \right]^{1/2}} \left[ \frac{\bar{\lambda}_i f(|\mathbf{w}_i|)}{R_i} \right]^{1/2},
\end{align*}
which can be depicted as the proportional resource allocation.

Regarding the KKT conditions, if $ \tilde{\lambda}_i \neq 0 $ and $ \bar{\lambda}_i \neq 0 $, we have the following conditions.
\begin{align*}
t_i^{\text{edge}} = \Delta t, \\
t_i^{\text{local}} = \Delta t.
\end{align*}
From the above conditions, we have $ t_i^{\text{edge}} = t_i^{\text{local}} $ which can be expanded as
\begin{equation*}
\frac{\delta_i f(|\mathcal{D}_i|)}{\tilde{\omega}_i R_i} + \frac{\sum_{i \in \mathcal{I}} \delta_i f(|\mathcal{D}_i|) \tau}{\Gamma_E} = \frac{(1 - \delta_i) f(|\mathcal{D}_i|) \tau}{\gamma_i \Gamma_i} + \frac{f(|\mathbf{w}_i|)}{\bar{\omega}_i R_i}.
\end{equation*}
Thus, we can derive $ \delta_i $ as follows.
\begin{align*}
\delta_i = & \left[ \frac{f(|\mathcal{D}_i|)}{\tilde{\omega}_i R_i} + \frac{f(|\mathcal{D}_i \tau|)}{\Gamma_E} + \frac{f(|\mathcal{D}_i|) \tau}{\gamma_i \Gamma_i} \right]^{-1} \nonumber \\ & \left[ \frac{f(|\mathcal{D}_i|) \tau}{\gamma_i \Gamma_i} + \frac{f(|\mathbf{w}_i|)}{\bar{\omega}_i R_i} - \frac{\sum_{j \in \mathcal{I}, j \neq i} \delta_j f(|\mathcal{D}_j|) \tau}{\Gamma_E} \right].
\end{align*}
The dataset offloading, $ \delta_i $, is determined to balance off the time consumption among the local training and edge training.

\bibliographystyle{IEEEtran}
\bibliography{fl_mec_ref}

\end{document}